\journalname{Funct. Anal. Other Math. (210)}
\spnewtheorem{primer}{Example}{\bfseries}{\rmfamily}
\spnewtheorem{zamechanie}{Remark}{\bfseries}{\rmfamily}
\spnewtheorem{gipoteza}{Conjecture}{\bfseries}{\itshape}
\renewcommand*{\leq}{\leqslant} \renewcommand*{\geq}{\geqslant}
\newcommand*{\st}{^\ast}
\newcommand*{\dtn}{d\times n}
\newcommand*{\mE}{\mathbb E} \newcommand*{\mN}{\mathbb N}
\newcommand*{\mR}{\mathbb R} \newcommand*{\mS}{\mathbb S}
\newcommand*{\ba}{\mathbf a} \newcommand*{\bb}{\mathbf b}
\newcommand*{\be}{\mathbf e} \newcommand*{\bg}{\mathbf g}
\newcommand*{\bq}{\mathbf q} \newcommand*{\br}{\mathbf r}
\newcommand*{\bs}{\mathbf s} \newcommand*{\bu}{\mathbf u}
\newcommand*{\bv}{\mathbf v} \newcommand*{\bw}{\mathbf w}
\newcommand*{\bx}{\mathbf x} \newcommand*{\by}{\mathbf y}
\newcommand*{\fa}{\mathfrak a} \newcommand*{\fb}{\mathfrak b}
\newcommand*{\fd}{\mathfrak d} \newcommand*{\fm}{\mathfrak m}
\newcommand*{\fn}{\mathfrak n} \newcommand*{\fv}{\mathfrak v}
\newcommand*{\fF}{\mathfrak F} \newcommand*{\fL}{\mathfrak L}
\newcommand*{\fP}{\mathfrak P} \newcommand*{\fX}{\mathfrak X}
\newcommand*{\fZ}{\mathfrak Z}
\newcommand*{\fso}{\mathfrak{so}}
\newcommand*{\eC}{\EuScript C} \newcommand*{\eK}{\EuScript K}
\newcommand*{\eM}{\EuScript M} \newcommand*{\eN}{\EuScript N}
\newcommand*{\eQ}{\EuScript Q} \newcommand*{\eR}{\EuScript R}
\newcommand*{\eU}{\EuScript U} \newcommand*{\eV}{\EuScript V}
\newcommand*{\rAr}{\mathrm{Ar}}
\newcommand*{\rF}{\mathrm F} \newcommand*{\rH}{\mathrm H}
\newcommand*{\rO}{\mathrm O}
\newcommand*{\rSO}{\mathrm{SO}}
\newcommand*{\vare}{\varepsilon} \newcommand*{\vart}{\vartheta}
\newcommand*{\vark}{\varkappa} \newcommand*{\vars}{\varsigma}
\newcommand*{\varp}{\varphi}
\newcommand*{\La}{\Lambda} \newcommand*{\Ups}{\Upsilon}
\newcommand*{\dg}{\dot{g}} \newcommand*{\dw}{\dot{w}}
\newcommand*{\dbg}{\dot{\bg}} \newcommand*{\dbq}{\dot{\bq}}
\newcommand*{\dbr}{\dot{\br}} \newcommand*{\dbu}{\dot{\bu}}
\newcommand*{\dbv}{\dot{\bv}} \newcommand*{\dbw}{\dot{\bw}}
\newcommand*{\dD}{\dot{D}} \newcommand*{\dX}{\dot{X}}
\newcommand*{\dZ}{\dot{Z}}
\newcommand*{\dfX}{\dot{\fX}}
\newcommand*{\dxi}{\dot{\xi}} \newcommand*{\drho}{\dot{\rho}}
\newcommand*{\dUps}{\dot{\Ups}}
\newcommand*{\bwcm}{\bw_{\mathrm{cm}}}
\newcommand*{\dbwcm}{\dbw_{\mathrm{cm}}}
\newcommand*{\sstar}{s_\star}
\newcommand*{\overH}{\overline{H}}
\newcommand*{\Zred}{Z^{\mathrm{reduced}}}
\newcommand*{\equal}{\big|_{\text{equal masses}}}
\newcommand*{\random}{\big|_{\text{random masses}}}
\newcommand*{\diag}{\mathop{\mathrm{diag}}\nolimits}
\newcommand*{\Tr}{\mathop{\mathrm{Tr}}\nolimits}
\newcommand*{\Var}{\mathop{\mathrm{Var}}\nolimits}
\newcommand*{\PiR}{\Pi_{\mathrm{left}}}
\newcommand*{\PiQ}{\Pi_{\mathrm{right}}}
\newcommand*{\ZI}{\dZ^I}
\newcommand*{\Zrot}{\dZ^{\mathrm{rot}}}
\newcommand*{\ZR}{\dZ^{\mathrm{left}}}
\newcommand*{\ZQ}{\dZ^{\mathrm{right}}}
\newcommand*{\Zout}{\dZ^{\mathrm{out}}}
\newcommand*{\Zin}{\dZ^{\mathrm{in}}}
\newcommand*{\Tvib}{T^{\mathrm{vib}}}
\newcommand*{\Trho}{T_\rho}
\newcommand*{\TL}{T_\La}
\newcommand*{\TI}{T^I}
\newcommand*{\Trot}{T^{\mathrm{rot}}}
\newcommand*{\Txi}{T_\xi}
\newcommand*{\Text}{T^{\mathrm{ext}}}
\newcommand*{\Tint}{T^{\mathrm{int}}}
\newcommand*{\Tres}{T^{\mathrm{res}}}
\newcommand*{\TJ}{T_J}
\newcommand*{\TK}{T_K}
\newcommand*{\Tac}{T_{\mathrm{ac}}}
\newcommand*{\Eout}{E^{\mathrm{out}}}
\newcommand*{\Ein}{E^{\mathrm{in}}}
\newcommand*{\Ec}{E^{\mathrm{coupl}}}
\newcommand*{\EoutA}{\Eout_1} \newcommand*{\EoutB}{\Eout_2}
\newcommand*{\EinA}{\Ein_1} \newcommand*{\EinB}{\Ein_2}
\newcommand*{\aTres}{\Tres_+} \newcommand*{\zTres}{\Tres_-}
\newcommand*{\aTac}{\Tac^+} \newcommand*{\zTac}{\Tac^-}
\newcommand*{\aEc}{\Ec_+} \newcommand*{\zEc}{\Ec_-}
\begin{document}

\title{Statistics of energy partitions for many-particle systems in arbitrary dimension}
\titlerunning{Statistics of energy partitions in arbitrary dimension}
\author{\mbox{Vincenzo Aquilanti} \and \mbox{Andrea Lombardi} \and \mbox{Mikhail B. Sevryuk}}
\authorrunning{V.~Aquilanti, A.~Lombardi, M.B.~Sevryuk}

\institute{Vincenzo Aquilanti \at
Dipartimento di Chimica, Universit\`a degli Studi di Perugia, Perugia (Italy) \\
\email{vincenzoaquilanti@yahoo.it}
\and
Andrea Lombardi \at
Dipartimento di Chimica, Universit\`a degli Studi di Perugia, Perugia (Italy) \\
\email{ebiu2005@gmail.com}
\and
Mikhail B. Sevryuk (\Letter) \at
V.L.~Tal$'$roze Institute of Energy Problems of Chemical Physics of the Russia Academy of Sciences, Moscow (Russia) \\
\email{sevryuk@mccme.ru}}

\date{Received: 27~November 2012 / Accepted: 4~December 2012}

\maketitle

\begin{abstract}
In some previous articles, we defined several partitions of the total kinetic energy $T$ of a system of $N$ classical particles in $\mR^d$ into components corresponding to various modes of motion. In the present paper, we propose formulas for the mean values of these components in the normalization $T=1$ (for any $d$ and $N$) under the assumption that the masses of all the particles are equal. These formulas are proven at the ``physical level'' of rigor and numerically confirmed for planar systems ($d=2$) at $3\leq N\leq 100$. The case where the masses of the particles are chosen at random is also considered. The paper complements our article of 2008 [Russian J Phys Chem~B 2(6):947--963] where similar numerical experiments were carried out for spatial systems ($d=3$) at $3\leq N\leq 100$.
\end{abstract}

\keywords{Multidimensional systems of classical particles \and Instantaneous phase-space invariants \and Kinetic energy partitions \and Formulas for the mean values \and Hyperangular momenta}

\subclass{53A17 \and 93C25 \and 70G10 \and 70B99}

\noindent\smash{\rule[12.9cm]{\textwidth}{1cm}}
\vspace*{-8mm}

\section{Introduction}\label{introduction}

The integral characteristic of motion in a system of classical particles is the total kinetic energy $T$. However, for quite different types of motion, the value of $T$ can obviously be the same. For instance, consider a system of two particles with the fixed center-of-mass and let $\br=\br(t)$ be the vector connecting the particles. Depending on the forces and initial conditions, this two-particle system can \emph{vibrate} (if the direction of $\br$ is not changing) or \emph{rotate} (if the length of $\br$ is not changing), and the kinetic energy in both the cases may attain any positive value. However, these two kinds of motion are in fact extremes: a typical motion is a mixture of vibrations and rotations. Of course, in the case of two particles, the total kinetic energy $T$ can be straightforwardly and naturally decomposed as the sum of two terms corresponding to vibrations and rotations. Indeed, let $\dbr=d\br/dt=\dbr_\parallel+\dbr_\perp$ where $\dbr_\parallel$ is the component parallel to $\br$ and $\dbr_\perp$ is the component orthogonal to $\br$. Then
\[
T = \frac{m_1m_2}{2(m_1+m_2)}|\dbr|^2 = \Tvib+\Trot,
\]
where
\[
\Tvib=\frac{m_1m_2}{2(m_1+m_2)}\bigl| \dbr_\parallel \bigr|^2=T\cos^2\theta
\]
is the vibrational energy and
\[
\Trot=\frac{m_1m_2}{2(m_1+m_2)}\bigl| \dbr_\perp \bigr|^2=T\sin^2\theta
\]
is the rotational energy (here $m_1$, $m_2$ are the masses of the particles and $\theta$ is the angle between $\br$ and $\dbr$).

In the case of three or more particles, there are much more kinds of motion: one has to distinguish rotations of the system in question (the ``\emph{cluster}'' or ``\emph{aggregate}'') as a whole, changes in the principal moments of inertia, various rearrangements of particles in the cluster, etc. In this situation, it becomes a rather non-trivial and ambiguous task to define the components of $T$ corresponding to various \emph{modes} of the motion. This problem has been discussed in the literature for decades and several approaches to kinetic energy partitioning have been proposed; see e.g.\ the well known papers \cite{Eckart,FTSmith,Chapuisat,Littlejohn}, the recent studies by Marsden and coworkers \cite{Marsden1,Marsden2}, and references therein.

The importance of exploring the contributions of various motion modes to the total kinetic energy $T$ stems, to a large extent, from the fact that such energy components may most probably be used as effective \emph{global indicators} of dynamical features and critical phenomena (e.g.\ phase transitions) in classical clusters. If these energy terms are defined in a sufficiently ``symmetric'' and ``invariant'' manner and can be computed automatically and fast from the coordinates and velocities of the particles, then one expects to be able to straightforwardly detect structural metamorphoses in the cluster by observing abrupt changes in the way the total kinetic energy is distributed among the modes. In the case of large clusters, this is crucial for applications because it is much easier to trace a few indicators than to analyze a huge collection of data pertaining to all the particles.

Starting in 2002, we have published a series of papers \cite{A1,A2,A3,A4,A} where we proposed and preliminarily tested a number of such global indicators on the basis of the so-called \emph{hyperspherical approach} to cluster dynamics. In particular, several partitions of the total kinetic energy $T$ with various amazing features were defined in~\cite{A3} for particles in the conventional three-dimensional space and in~\cite{A4} for particles in the Euclidean space of an arbitrary dimension $d$. Moreover, our long paper~\cite{A4} contains also \emph{rigorous mathematical proofs} of many properties (including the invariance under certain group actions) of the terms of these partitions. A refinement of one of the partitions was performed in~\cite{A} and four new terms were introduced. Some perspectives of the partitions in question are discussed from the general viewpoint of the methods of molecular dynamics (with the particles being atoms or ions) in the short reviews \cite{A-Rev1,A-Rev2,A-Rev3}. Applications include studies of such phenomena and processes in chemical physics as phase transitions in small neutral argon clusters $\rAr_{\fn}$ with $\fn=3,13,38,55$ \cite{A2,A-Appl1,A-Appl2}, dynamics and thermodynamics of small ionic argon clusters $\rAr_{\fn}^+$ with $\fn=3,6,9$~\cite{A-Appl3}, the prototypical exchange reaction $\rF+\rH_2\to\rH+\rH\rF$ of atomic fluorine and molecular hydrogen~\cite{A-Appl1}, and ultrafast relaxation dynamics of krypton atomic matrices doped with a nitrogen monoxide molecule \cite{A-Appl4,A-Appl5} (some of these applications are surveyed in~\cite{A-Rev3}). These studies have confirmed the usefulness of the kinetic energy partitions for examining dynamics of classical nanoaggregates.

However, neither the theory developed in the articles \cite{A2,A3,A4} nor the applications considered in \cite{A2,A-Appl1,A-Appl2,A-Appl3,A-Appl4,A-Appl5} allow one to conclude how the energy partitions defined in \cite{A3,A4} look like for ``typical'' systems (roughly speaking, for a random choice of the coordinates and velocities of the particles) or how the statistics of the terms depends on the number $N$ of the particles and on their masses. The statistics of the kinetic energy components for systems in the conventional space $\mR^3$ was studied numerically in our paper~\cite{A} in the range $3\leq N\leq 100$ for two extreme situations: in the case where all the particles have equal masses and in the case where the masses are chosen at random. The main observation of~\cite{A} is that in the situation of equal masses, the mean values of almost all the terms of the partitions in the normalization $T=1$ are \emph{very simple} (in fact, linear fractional for $N\geq 4$) functions of $N$. For some of the terms, the paper~\cite{A} proposed also generalizations of the formulas for the mean values to an arbitrary dimension $d$ of the space.

The aim of the present article is fourfold. First, we rigorously prove the properties of the new energy terms $\EoutA$, $\EoutB$, $\EinA$, $\EinB$ announced in our previous paper~\cite{A}. Second, in the situation of equal masses, we suggest formulas for the mean values of \emph{all} the energy components (except for the so-called unbounded ones) for $N$ particles in $\mR^d$ with an \emph{arbitrary} $d$: the mean values turn out to be simple rational functions of $d$, $N$, and $\min(d,N-1)$. Third, we prove these formulas at the ``physical level'' of rigor. Fourth, we examine numerically the statistics of all the energy terms, both in the situation of equal masses and in the situation of random masses, for $N$ particles \emph{on the plane} $\mR^2$ in the same range $3\leq N\leq 100$ as in~\cite{A}. In the situation of equal masses, our numerical experiments do confirm the formulas for the mean values of the energy terms.

It is worthwhile to note that two-dimensional physics is a well developed and rapidly progressing field of science, see e.g.\ the works \cite{Evidence,Shik1,Gomez,Shik2,Efthimiou,Abdalla,Geim,Novoselov,Novoselov0,Novoselov1} and references therein (by the way, the preprint~\cite{Efthimiou} contains 680 references). Of course, here the words ``two-dimensional'' can be used either in the mathematical sense or in the sense of monolayer films. In particular, the Nobel Prize in Physics of 2010 was awarded jointly to Andre K. Geim and Konstantin S. Novoselov ``for groundbreaking experiments regarding the two-dimensional material graphene'' \cite{Geim,Novoselov,Nobel}. Moreover, a paradigm was recently proposed in which the \emph{actual} space-time is a fundamentally $(1+1)$-dimensional universe but it is ``wrapped up'' in such a way that it appears higher dimensional (say, $2+1$ and $3+1$) at larger distances \cite{Stojkovic1,Stojkovic2,Stojkovic3}.

The paper is organized as follows. Some fundamental theoretical concepts related to our approach to kinetic energy partitions are recalled in Sections~\ref{invariants} and~\ref{momenta} while the partitions themselves are defined in Section~\ref{partitions}. Almost all the material of Sections~\mbox{\ref{invariants}--\ref{partitions}} is in fact contained in our articles \cite{A3,A4} and is reproduced here for the reader's convenience. The new energy terms $\EoutA$, $\EoutB$, $\EinA$, $\EinB$ of~\cite{A} are defined and explored in Section~\ref{elaboration}. The formulas for the mean values of the bounded energy components for $N$ particles of equal masses in $\mR^d$ are proposed in Section~\ref{contexts} and substantiated in Section~\ref{dokazatelstvo}. The numerical experiments for $d=2$ and their results are described in Section~\ref{numerics}. Concluding remarks follow in Section~\ref{conclusion}.

\section{Instantaneous phase-space invariants}\label{invariants}

Consider a system of $N$ classical particles in the Euclidean space $\mR^d$ ($d,N\in\mN=\{1; \, 2; \, \ldots\}$) with masses $m_1,m_2,\ldots,m_N$. Let $\br_1=\br_1(t), \, \br_2=\br_2(t), \, \ldots, \, \br_N=\br_N(t)$ be the radii vectors of these particles with respect to the origin. Introduce the notation
\[
M=\sum_{\alpha=1}^N m_\alpha, \qquad \bq_\alpha=(m_\alpha/M)^{1/2}\br_\alpha
\]
($1\leq\alpha\leq N$). In terms of the total mass $M$ of the system and the \emph{mass-scaled} radii vectors $\bq_\alpha$, the total kinetic energy $T$ and the total angular momentum $J$ of this system are expressed in an especially simple way:
\begin{gather}
T = \frac{1}{2}\sum_{\alpha=1}^N m_\alpha|\dbr_\alpha|^2 = \frac{M}{2}\sum_{\alpha=1}^N |\dbq_\alpha|^2,
\label{T} \\
J^2 = \sum_{1\leq i<j\leq d} J_{ij}^2, \qquad J_{ij} = \sum_{\alpha=1}^N m_\alpha(\br_{i;\alpha}\dbr_{j;\alpha}-\br_{j;\alpha}\dbr_{i;\alpha}) = M\sum_{\alpha=1}^N (\bq_{i;\alpha}\dbq_{j;\alpha}-\bq_{j;\alpha}\dbq_{i;\alpha})
\label{J}
\end{gather}
($1\leq i<j\leq d$). Here $\br_{i;\alpha}$ and $\bq_{i;\alpha}$ denote respectively the $i$th components of the vectors $\br_\alpha$ and $\bq_\alpha$ and, as usual, the dot over a letter means the time derivative. For $d=3$, the formulas~\eqref{T} and~\eqref{J} give the conventional total kinetic energy and total angular momentum of a system of $N$ particles. Thus, the mass-scaled radii vectors $\bq_\alpha(t)$ provide a more adequate description of the current state of a system of classical particles than the radii vectors $\br_\alpha(t)$ themselves.

\begin{definition}
The \emph{position matrix} of a system of $N$ classical particles in $\mR^d$ is the $d\times N$ matrix $Z$ whose columns are the mass-scaled radii vectors $\bq_1,\bq_2,\ldots,\bq_N$.
\end{definition}

Of course, the position matrix at any given time instant $t$ depends on the Cartesian coordinate frame chosen. The choice of another coordinate frame (with the same origin) is described by a transformation $Z(t)\rightsquigarrow RZ(t)$ with an orthogonal $d\times d$ matrix $R\in\rO(d)$. On the other hand, according to the general duality concept for the \emph{physical space} and the abstract ``\emph{kinematic space}'' \cite{A3,A4,A} (which are respectively $\mR^d$ and $\mR^N$ in our case), one can also consider transformations of the form $Z(t)\rightsquigarrow Z(t)Q\st$ with orthogonal $N\times N$ matrices $Q\in\rO(N)$ where the asterisk designates transposing. Such transformations correspond to changes in the ``\emph{type}'' of the coordinate frame. For instance, the passage from Cartesian coordinates to Jacobi or Radau--Smith coordinate frames (see e.g.\ the papers \cite{Littlejohn,A4,frames1,frames2} and references therein) is equivalent to a multiplication of the position matrix from the right by a suitable fixed orthogonal matrix~\cite{A4}.

\begin{primer}\label{reduction}
Suppose that the $N\geq 2$ particles in question move without external forces and their center-of-mass coincides with the origin. Then in any Cartesian coordinate frame
\[
\sum_{\alpha=1}^N m_\alpha^{1/2}\bq_\alpha = M^{-1/2}\sum_{\alpha=1}^N m_\alpha\br_\alpha \equiv 0.
\]
Choose an arbitrary matrix $Q\in\rO(N)$ with the entries in the last row equal to
\[
Q_{N\alpha}=(m_\alpha/M)^{1/2}, \quad 1\leq\alpha\leq N.
\]
Then the \emph{last column} of the matrix $Z(t)Q\st$ is zero at any time moment $t$ (and one may therefore regard the ``kinematic space'' to be $\mR^{N-1}$).
\end{primer}

We are interested in characterizing systems of classical particles in $\mR^d$ by various quantities that are determined, at any time instant $t_0$, by the positions $Z(t_0)$ and velocities $\dZ(t_0)$ of the particles\footnote{Whence the words ``\emph{phase-space}'' in Definition~\ref{ipsi} below.} at this time moment $t_0$ only\footnote{Whence the word ``\emph{instantaneous}'' in Definition~\ref{ipsi} below.} (rather than by the whole trajectory $\bigl\{ Z(t) \bigm| t_{\mathrm{begin}}\leq t\leq t_{\mathrm{end}} \bigr\}$) and are invariant under orthogonal coordinate transformations\footnote{Whence the word ``\emph{invariant}'' in Definition~\ref{ipsi} below.} both in the physical space and in the ``kinematic space''. Being inspired by the discussion above and, in particular, by Example~\ref{reduction}, one arrives at the following mathematical model.

Let $d,n\in\mN$. Consider the space $(\mR^{\dtn})^2$ of the \emph{pairs} of real $\dtn$ matrices. We will write down these pairs as $(Z,\dZ)$. On the space $(\mR^{\dtn})^2$, there acts the group $\rO(d)\times\rO(n)$:
\begin{equation}
(R,Q)(Z,\dZ) = (RZQ\st,R\dZ Q\st), \qquad R\in\rO(d), \quad Q\in\rO(n).
\label{action}
\end{equation}

\begin{definition}\label{ipsi}
An \emph{instantaneous phase-space invariant} of systems of classical particles in $\mR^d$ is a collection of functions
\[
f_{d,n}:\eM_{d,n}\times\mR^{\dtn}\to\mR, \qquad \eM_{d,n}\subset\mR^{\dtn}, \qquad n\geq n_0
\]
possessing the following two properties.

First, the definition domains $\eM_{d,n}\times\mR^{\dtn}$ of the functions $f_{d,n}$ and the functions $f_{d,n}$ themselves are invariant under the action~\eqref{action} of the group $\rO(d)\times\rO(n)$:
\[
Z\in\eM_{d,n} \quad\; \Longrightarrow \quad\; RZQ\st\in\eM_{d,n} \quad \& \quad f_{d,n}(RZQ\st,R\dZ Q\st)=f_{d,n}(Z,\dZ)
\]
for any $\dZ\in\mR^{\dtn}$ ($n\geq n_0$), $R\in\rO(d)$, and $Q\in\rO(n)$.

Second, the value of the function $f_{d,n}$ remains the same if one augments both the matrices $Z$ and $\dZ$ by the $(n+1)$th column equal to zero:
\[
Z\in\eM_{d,n} \quad\; \Longrightarrow \quad\; (Z\;0)\in\eM_{d,n+1} \quad \& \quad f_{d,n+1}\bigl( (Z\;0),(\dZ\;0) \bigr)=f_{d,n}(Z,\dZ)
\]
for any $\dZ\in\mR^{\dtn}$ ($n\geq n_0$).
\end{definition}

Most of the instantaneous phase-space invariants we will deal with will depend on the parameter $M$ (the total mass of the system of particles).

\begin{primer}\label{NvsNminus1}
Let the center-of-mass of a system of $N\geq 2$ particles in $\mR^d$ coincide with the origin. As was explained in Example~\ref{reduction}, one can choose a (non-Cartesian) coordinate frame $\fF$ in which the last column of the position matrix $Z(t)$ of such a system is identically zero. The first $N-1$ columns of $Z(t)$ constitute the $d\times(N-1)$ matrix $\Zred(t)$. Any instantaneous phase-space invariant $f$ of the system in question can be calculated either using the $d\times N$ position matrices in the initial Cartesian coordinate frame ($n=N$) or using the $d\times(N-1)$ \emph{reduced} position matrices $\Zred$ in the coordinate frame $\fF$ ($n=N-1$).
\end{primer}

In the sequel, it will be convenient to introduce the notation
\[
\fm=\min(d,n).
\]
As is very well known (see e.g.\ the manuals \cite{Golub,Horn,Voevodin,Watkins}), any $\dtn$ matrix $Z\in\mR^{\dtn}$ can be decomposed as the product of three matrices
\begin{equation}
Z=D\Ups X\st, \qquad D\in\rO(d), \quad X\in\rO(n),
\label{SVD}
\end{equation}
where all the entries of the $\dtn$ matrix $\Ups$ are zeroes with the possible exception of the diagonal entries:
\[
\Ups_{11}=\xi_1, \; \Ups_{22}=\xi_2, \; \ldots, \; \Ups_{\fm\fm}=\xi_{\fm}, \qquad \xi_1\geq\xi_2\geq\cdots\geq\xi_{\fm}\geq 0.
\]
The representation~\eqref{SVD} is called the \emph{singular value decomposition} (\emph{SVD}) of the matrix $Z$. The numbers $\xi_1,\xi_2,\ldots,\xi_{\fm}$ are called the \emph{singular values} of the matrix $Z$ and are determined uniquely by $Z$ although the orthogonal factors $D$ and $X$ in the equality~\eqref{SVD} are \emph{not}. If $n\geq d$ then the $d$ singular values of $Z$ are the square roots of the eigenvalues of the symmetric $d\times d$ matrix $ZZ\st$. If $n\leq d$ then the $n$ singular values of $Z$ are the square roots of the eigenvalues of the symmetric $n\times n$ matrix $Z\st Z$.

It is clear that each singular value of $Z$ is an instantaneous phase-space invariant; to be more precise, for each $1\leq i\leq d$, the $i$th singular value $\xi_i:\mR^{\dtn}\to\mR$ is an instantaneous phase-space invariant for $n\geq i$. More generally, any function of the singular values of $Z$ is an instantaneous phase-space invariant, and conversely, any instantaneous phase-space invariant independent of $\dZ$ is a function of the singular values of $Z$.

Along with the SVD of a matrix $Z$, it is often expedient to consider a \emph{signed singular value decomposition} (\emph{signed SVD}) \cite{A4,DE}. A signed SVD of a $\dtn$ matrix $Z$ is a representation~\eqref{SVD} where again all the entries of the $\dtn$ matrix $\Ups$ are zeroes with the possible exception of the diagonal entries $\Ups_{11},\Ups_{22},\ldots,\Ups_{\fm\fm}$, but it is no longer assumed that $\Ups_{11}\geq\Ups_{22}\geq\cdots\geq\Ups_{\fm\fm}\geq 0$. Of course, the numbers $|\Ups_{11}|,|\Ups_{22}|,\ldots,|\Ups_{\fm\fm}|$ constitute in this case an unordered collection of the singular values of $Z$. The differentiability properties of a signed SVD are usually \emph{better} than those of the SVD (see the papers \cite{A4,DE} and references therein).

Recall that the standard \emph{Frobenius inner product} on the space $\mR^{\dtn}$ of real $\dtn$ matrices is defined by the formula
\begin{equation}
\langle Z^a,Z^b\rangle = \Tr\bigl[ Z^a(Z^b)\st \bigr] = \sum_{\alpha=1}^n\sum_{i=1}^d Z^a_{i\alpha}Z^b_{i\alpha},
\label{Frobenius}
\end{equation}
where Tr denotes the trace of a square matrix. The corresponding matrix norm $\|{\cdot}\|$ given by
\[
\|Z\|^2 = \langle Z,Z\rangle = \Tr(ZZ\st) = \sum_{\alpha=1}^n\sum_{i=1}^d Z_{i\alpha}^2 = \sum_{\sigma=1}^{\fm} \xi_\sigma^2
\]
is called the \emph{Frobenius norm} \cite{Golub,Horn,Watkins}, the \emph{Euclidean norm} \cite{Horn,Voevodin}, the \emph{$l_2$-norm}~\cite{Horn}, the \emph{Schur norm}~\cite{Horn}, the \emph{Hilbert--Schmidt norm}~\cite{Horn}, or the \emph{spherical norm}~\cite{Voevodin}.

In the context of the pairs $(Z,\dZ)\in(\mR^{\dtn})^2$, the instantaneous phase-space invariant
\[
\rho=\|Z\|
\]
is called the \emph{hyperradius} (of the system of particles whose position matrix is $Z$).

\begin{zamechanie}\label{dot}
In the sequel, we will use the following notation. Given a pair $(Z,\dZ)\in(\mR^{\dtn})^2$ and a certain object (number, vector, etc.)\ $\fX=\fX(Z)$ dependent on $Z$, we will define $\dfX=\dfX(Z,\dZ)$ as
\[
\dfX=\left. \frac{d}{dt}\fX\bigl( \fZ(t) \bigr) \right|_{t=0},
\]
where $\fZ:(-\vare,\vare)\to\mR^{\dtn}$ ($0<\vare\ll 1$) is an arbitrary matrix-valued function such that
\[
\fZ(0)=Z, \qquad \left. \frac{d\fZ(t)}{dt} \right|_{t=0}=\dZ.
\]
In all the cases below, $\dfX$ will be well defined for all the pairs $(Z,\dZ)$ with the possible exception of matrices $Z$ lying in a set of positive codimension. For such $Z$, one will always be able to define the resulting quantities we will be interested in by continuity.
\end{zamechanie}

\section{Hyperangular momenta}\label{momenta}

Instantaneous phase-space invariants dependent on $\dZ$ are exemplified by the so-called \emph{hyperangular momenta} of a system of particles (with respect to the origin), namely, the \emph{physical angular momentum} $J$, the \emph{kinematic angular momentum} $K$ dual to $J$, the \emph{grand angular momentum} $\La$, and the \emph{singular angular momentum} $L$ \cite{A2,A3,A4,A,A-Rev2}. These non-negative quantities are defined by the formulas
\begin{gather}
J^2 = \sum_{1\leq i<j\leq d} J_{ij}^2, \qquad J_{ij} = M\sum_{\alpha=1}^n (Z_{i\alpha}\dZ_{j\alpha}-Z_{j\alpha}\dZ_{i\alpha}),
\label{JJ} \\
K^2 = \sum_{1\leq\alpha<\beta\leq n} K_{\alpha\beta}^2, \qquad K_{\alpha\beta} = M\sum_{i=1}^d (Z_{i\alpha}\dZ_{i\beta}-Z_{i\beta}\dZ_{i\alpha}),
\label{KK} \\[1ex]
\La^2 = M^2\sum_{\substack{1\leq i,j\leq d \\ 1\leq\alpha,\beta\leq n \\ i<j \;\;\text{ or }\;\; i=j \;\&\; \alpha<\beta}} (Z_{i\alpha}\dZ_{j\beta}-Z_{j\beta}\dZ_{i\alpha})^2,
\label{LaLa} \\[1ex]
L^2 = \sum_{1\leq\sigma<\tau\leq\fm} L_{\sigma\tau}^2, \qquad L_{\sigma\tau} = M(\xi_\sigma\dxi_\tau-\xi_\tau\dxi_\sigma)
\label{LL}
\end{gather}
($\xi_1,\xi_2,\ldots,\xi_{\fm}$ being the singular values of $Z$ and their derivatives $\dxi_\sigma$ being defined in Remark~\ref{dot}). Of course, the formula~\eqref{JJ} coincides with~\eqref{J} if $Z$ is the position matrix in a Cartesian coordinate frame and $n=N$. In our previous papers \cite{A2,A3,A4,A-Rev2}, we denoted the singular angular momentum by $L_\xi$.

The correct definition~\eqref{KK} of the kinematic angular momentum $K$ was first given in our article~\cite{A3} in the particular case $d=3$ although the words ``kinematic angular momentum'' themselves were used earlier \cite{A1,A2}. The grand angular momentum $\La$ was first introduced by Smith~\cite{FTSmith}, also in the particular dimension $d=3$ (and mainly for systems of three particles). The singular angular momentum $L$ was first defined in the paper~\cite{A2}, again in the particular case $d=3$ only. For an arbitrary dimension $d$ of the physical space, the hyperangular momenta $J$, $K$, $\La$, and $L$ were introduced in our article~\cite{A4}.

\begin{theorem}
All the four hyperangular momenta $J$, $K$, $\La$, and $L$ are instantaneous phase-space invariants.
\end{theorem}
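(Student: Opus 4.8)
The plan is to verify, for each of the four hyperangular momenta separately, the two defining properties of an instantaneous phase-space invariant from Definition~\ref{ipsi}: invariance under the action~\eqref{action} of $\rO(d)\times\rO(n)$, and stability under augmenting $Z,\dZ$ by a zero column. The second property is the easy one and I would dispose of it quickly at the end: appending a zero $(n+1)$th column to both $Z$ and $\dZ$ contributes no new terms to the sums defining $J^2$, $K^2$, and $\La^2$ (all added summands involve the vanishing entries $Z_{i,n+1}=\dZ_{i,n+1}=0$), while for $L$ the singular values are unchanged since a zero column adds a zero singular value $\xi_{n+1}=0$ with $\dxi_{n+1}$ contributing nothing to the antisymmetric expressions $\xi_\sigma\dxi_\tau-\xi_\tau\dxi_\sigma$.

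The substantive work is the $\rO(d)\times\rO(n)$-invariance, and here I would treat the momenta in two groups. For $J$, $K$, and $\La$, the cleanest route is to recognize each as the Frobenius norm of a naturally constructed antisymmetric matrix, then compute how that matrix transforms under $(Z,\dZ)\mapsto(RZQ\st,R\dZ Q\st)$. Concretely, $J_{ij}=M\langle \text{rows } i,j\rangle$-type expressions assemble into the antisymmetric $d\times d$ matrix $A=M(Z\dZ\st-\dZ Z\st)$, so that $J^2=\tfrac12\|A\|^2$ (up to the bookkeeping of the $i<j$ sum). Under the action, $Z\dZ\st\mapsto RZQ\st Q\dZ\st R\st=R(Z\dZ\st)R\st$ because $Q\st Q=I_n$, so $A\mapsto RAR\st$; since the Frobenius norm is invariant under conjugation by orthogonal matrices, $J$ is invariant. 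Symmetrically, $K$ arises from the antisymmetric $n\times n$ matrix $B=M(Z\st\dZ-\dZ\st Z)$ with $B\mapsto QBQ\st$ (now using $RR\st=I_d$), giving invariance of $K$. For $\La$, I expect the analogous identity $\La^2=M^2\|\dZ Z\st\oplus\cdots\|$ or more simply that $\La^2$ equals $M^2$ times a sum of squared $2\times 2$ minors of the combined object, invariant because both $R$ and $Q$ act by orthogonal conjugation on the relevant tensor; I would verify that $\La^2$ can be written as $M^2\bigl(\|Z\|^2\|\dZ\|^2-\langle Z,\dZ\rangle^2 + \text{antisymmetric parts}\bigr)$ or directly as a norm of the bivector $Z\wedge\dZ$, each factor of which is manifestly $\rO(d)\times\rO(n)$-invariant.

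For the singular angular momentum $L$, the argument is different and I would base it entirely on the fact, already established in the excerpt, that the singular values $\xi_1,\dots,\xi_{\fm}$ are themselves instantaneous phase-space invariants. Since $L$ is built from the $\xi_\sigma$ and their time derivatives $\dxi_\sigma$ (as defined in Remark~\ref{dot}), invariance of $L$ follows once I argue that the $\dxi_\sigma$ transform correctly: because each $\xi_\sigma(Z)$ is an $\rO(d)\times\rO(n)$-invariant function of $Z$, differentiating the identity $\xi_\sigma(RZQ\st)=\xi_\sigma(Z)$ along a curve realizing $(Z,\dZ)$ shows $\dxi_\sigma(RZQ\st,R\dZ Q\st)=\dxi_\sigma(Z,\dZ)$, so each bilinear combination $\xi_\sigma\dxi_\tau-\xi_\tau\dxi_\sigma$ is invariant and hence so is $L^2$.

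The main obstacle I anticipate is purely bookkeeping rather than conceptual: matching the index conventions in~\eqref{LaLa} (the hybrid condition $i<j$ \emph{or} $i=j\ \&\ \alpha<\beta$) to a clean coordinate-free expression, and handling the non-uniqueness and non-smoothness of the SVD factors when verifying the $\dxi_\sigma$ transform correctly. The latter is addressed by Remark~\ref{dot}'s convention of defining derivatives along an arbitrary realizing curve and extending by continuity off a set of positive codimension (the locus of coincident singular values), so I would note that the invariance identity holds on the generic stratum and persists by continuity. The $\La$ case is the one where I would spend the most care to present the invariance transparently; the $J$, $K$, $L$ cases I expect to be routine once the norm-of-antisymmetric-matrix viewpoint is set up.
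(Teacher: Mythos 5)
Your proof is correct, but there is nothing in the paper to compare it against: the paper states this theorem without proof, delegating it (like most of the material of Sections~\ref{invariants}--\ref{partitions}) to the earlier article~\cite{A4}. That said, your mechanism coincides with the machinery the paper itself records as Theorem~\ref{alt}. Your matrices $A=M(Z\dZ\st-\dZ Z\st)$ and $B=M(Z\st\dZ-\dZ\st Z)$, with $J^2=\tfrac12\|A\|^2$, $K^2=\tfrac12\|B\|^2$ and the equivariance $A\mapsto RAR\st$, $B\mapsto QBQ\st$, are the coordinate-free content of \eqref{JJJ} and~\eqref{KKK}; and for $\La$ you can discard the hedge about extra ``antisymmetric parts'': the definition~\eqref{LaLa} is exactly the sum of squared $2\times 2$ minors of the $2\times dn$ array whose rows are $Z$ and $\dZ$ flattened into vectors of $\mR^{dn}$, so Lagrange's identity gives precisely $\La^2=M^2\bigl(\|Z\|^2\|\dZ\|^2-\langle Z,\dZ\rangle^2\bigr)$, which is the paper's formula~\eqref{LaLaLa}, each ingredient being manifestly invariant because $\|RZQ\st\|=\|Z\|$ and $\langle RZQ\st,R\dZ Q\st\rangle=\langle Z,\dZ\rangle$. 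Your treatment of $L$ (differentiate the identity $\xi_\sigma\bigl(R\fZ(t)Q\st\bigr)=\xi_\sigma\bigl(\fZ(t)\bigr)$ along a curve realizing $(Z,\dZ)$, then extend by continuity off the degenerate locus, as licensed by Remark~\ref{dot}) is also sound. One step you should tighten: in the zero-column property for $L$ in the case $n<d$, the augmentation creates a genuinely new singular value $\xi_{n+1}$, and killing the new terms $M(\xi_\sigma\dxi_{n+1}-\xi_{n+1}\dxi_\sigma)$ requires not only $\xi_{n+1}=0$ but also $\dxi_{n+1}=0$; this does hold, because the curve $\bigl(\fZ(t)\;0\bigr)$ realizing the augmented pair has that singular value identically zero in $t$, but your phrase ``contributing nothing'' silently passes over the surviving-looking term $\xi_\sigma\dxi_{n+1}$.
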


\begin{theorem}\label{alt}
The hyperangular momenta $J$, $K$, and $\La$ can be alternatively computed as
\begin{gather}
J^2 = M^2\sum_{\alpha,\beta=1}^n \left[ \Gamma_{\alpha\beta}^{(1)}\Gamma_{\alpha\beta}^{(3)}-\Gamma_{\alpha\beta}^{(2)}\Gamma_{\beta\alpha}^{(2)} \right],
\label{JJJ} \\
\Gamma_{\alpha\beta}^{(1)} = \sum_{i=1}^d Z_{i\alpha}Z_{i\beta}, \qquad \Gamma_{\alpha\beta}^{(2)} = \sum_{i=1}^d Z_{i\alpha}\dZ_{i\beta}, \qquad \Gamma_{\alpha\beta}^{(3)} = \sum_{i=1}^d \dZ_{i\alpha}\dZ_{i\beta},
\nonumber \\
K^2 = M^2\sum_{i,j=1}^d \left[ \Delta_{ij}^{(1)}\Delta_{ij}^{(3)}-\Delta_{ij}^{(2)}\Delta_{ji}^{(2)} \right],
\label{KKK} \\
\Delta_{ij}^{(1)} = \sum_{\alpha=1}^n Z_{i\alpha}Z_{j\alpha}, \qquad \Delta_{ij}^{(2)} = \sum_{\alpha=1}^n Z_{i\alpha}\dZ_{j\alpha}, \qquad \Delta_{ij}^{(3)} = \sum_{\alpha=1}^n \dZ_{i\alpha}\dZ_{j\alpha},
\nonumber \\
\La^2 = M^2\Bigl( \|Z\|^2\|\dZ\|^2-\langle Z,\dZ\rangle^2 \Bigr).
\label{LaLaLa}
\end{gather}
There hold the inequalities $J^2+L^2\leq\La^2$ and $K^2+L^2\leq\La^2$.
\end{theorem}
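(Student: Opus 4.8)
The plan is to dispatch the three identities \eqref{JJJ}, \eqref{KKK}, \eqref{LaLaLa} by direct expansion, and then to obtain the two inequalities by reducing to a singular value decomposition frame, where all four momenta become explicit.

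For \eqref{JJJ} I would read $J_{ij}/M$ as $\langle z_i,\dot z_j\rangle-\langle z_j,\dot z_i\rangle$, where $z_i$ is the $i$-th row of $Z$, $\dot z_i$ the $i$-th row of $\dZ$, and $\langle\cdot,\cdot\rangle$ the Euclidean product in $\mR^n$. This expression is antisymmetric in $i,j$, so its square is symmetric and vanishes on the diagonal, whence $\sum_{i<j}J_{ij}^2=\tfrac12\sum_{i,j}J_{ij}^2$. Expanding the square and relabeling $i\leftrightarrow j$ to identify the two squared terms collapses the sum to $M^2\sum_{\alpha,\beta}\bigl[\Gamma^{(1)}_{\alpha\beta}\Gamma^{(3)}_{\alpha\beta}-\Gamma^{(2)}_{\alpha\beta}\Gamma^{(2)}_{\beta\alpha}\bigr]$ once one recognizes $\sum_i Z_{i\alpha}Z_{i\beta}$, $\sum_i Z_{i\alpha}\dZ_{i\beta}$, $\sum_i\dZ_{i\alpha}\dZ_{i\beta}$ as the entries $\Gamma^{(1)},\Gamma^{(2)},\Gamma^{(3)}$. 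Formula \eqref{KKK} is the verbatim dual obtained by replacing rows with columns, that is, by transposing $Z$ and $\dZ$ (which swaps $J\leftrightarrow K$ and $d\leftrightarrow n$). For \eqref{LaLaLa} I would observe that the index set in \eqref{LaLa} is exactly the lexicographically ordered pairs of the composite index $(i,\alpha)\in\{1,\dots,d\}\times\{1,\dots,n\}$, and that $\bigl(Z_{i\alpha}\dZ_{j\beta}-Z_{j\beta}\dZ_{i\alpha}\bigr)^2$ is symmetric under interchange of the two composite indices and vanishes when they coincide. Hence $\La^2$ equals $\tfrac{M^2}{2}$ times the full unordered sum, and Lagrange's identity for $Z,\dZ\in\mR^{dn}$ gives $M^2\bigl(\|Z\|^2\|\dZ\|^2-\langle Z,\dZ\rangle^2\bigr)$.

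For the inequalities I would use that $J,K,\La,L$ are instantaneous phase-space invariants (by the preceding theorem), so all four may be evaluated after applying any $(R,Q)\in\rO(d)\times\rO(n)$. Taking $(R,Q)=(D\st,X\st)$ from an SVD $Z=D\Ups X\st$ replaces $(Z,\dZ)$ by $(\Ups,W)$ with $W=D\st\dZ X$, so $Z$ becomes diagonal with entries $\xi_1,\dots,\xi_{\fm}$ (put $\xi_\sigma=0$ for $\sigma>\fm$). In this frame, writing $P=\|Z\|^2=\sum_\sigma\xi_\sigma^2$,
\[
J^2/M^2=\sum_{1\leq i<j\leq d}(\xi_i W_{ji}-\xi_j W_{ij})^2,\qquad K^2/M^2=\sum_{1\leq\alpha<\beta\leq n}(\xi_\alpha W_{\alpha\beta}-\xi_\beta W_{\beta\alpha})^2,
\]
and $\La^2/M^2=P\,\|W\|^2-\bigl(\sum_\sigma\xi_\sigma W_{\sigma\sigma}\bigr)^2$, where each term involving a vanishing $\xi$ is dropped so that no undefined entry is referenced. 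For $L$ I would use that, on the generic set of simple positive singular values, $\dxi_\sigma=W_{\sigma\sigma}$, giving $L^2/M^2=\sum_{\sigma<\tau}(\xi_\sigma W_{\tau\tau}-\xi_\tau W_{\sigma\sigma})^2$, with the general case following by continuity as in Remark~\ref{dot}. The decisive observation is that applying Lagrange's identity only to the diagonal entries reproduces $L$: with $Q_{\mathrm{diag}}=\sum_\sigma W_{\sigma\sigma}^2$ one has $P\,Q_{\mathrm{diag}}-\bigl(\sum_\sigma\xi_\sigma W_{\sigma\sigma}\bigr)^2=L^2/M^2$, so that
\[
\La^2/M^2-L^2/M^2=P\bigl(\|W\|^2-Q_{\mathrm{diag}}\bigr)=P\sum_{i\neq\alpha}W_{i\alpha}^2 .
\]
It then remains to bound $J^2$ and $K^2$ by this quantity: Cauchy--Schwarz gives $(\xi_i W_{ji}-\xi_j W_{ij})^2\leq(\xi_i^2+\xi_j^2)(W_{ji}^2+W_{ij}^2)\leq P\,(W_{ji}^2+W_{ij}^2)$, and since the off-diagonal entries occurring in the $J$-sum are distinct and form a subset of all off-diagonal entries of $W$, summation yields $J^2/M^2\leq P\sum_{i\neq\alpha}W_{i\alpha}^2$; the identical estimate for columns handles $K$. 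Combining gives $J^2+L^2\leq\La^2$ and $K^2+L^2\leq\La^2$.

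I expect the main obstacle to be the singular-value derivative identity $\dxi_\sigma=W_{\sigma\sigma}$: it is transparent only where the $\xi_\sigma$ are distinct and nonzero, so some care is needed — via the continuity convention of Remark~\ref{dot} and the positive codimension of the degenerate locus — both to justify it in general and to confirm that the invariance used to pass to the SVD frame is compatible with the definition of $\dxi_\sigma$. By contrast, the three algebraic identities and the term-by-term Cauchy--Schwarz estimate are routine.
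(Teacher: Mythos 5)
Your proof is correct. One caveat about the comparison itself: this paper contains no proof of Theorem~\ref{alt} --- like the rest of the material of Sections~\ref{invariants}--\ref{partitions}, the statement is reproduced from \cite{A4}, where the rigorous proofs are given --- so your argument can only be judged on its own terms, and on those terms it stands. The three identities go through exactly as you outline: antisymmetry reduces $\sum_{i<j}$ to $\tfrac12\sum_{i,j}$, the expanded squares reassemble into the Gram-type sums $\Gamma^{(k)}$, formula \eqref{KKK} is the transpose-dual of \eqref{JJJ}, and reading the index constraint in \eqref{LaLa} as the lexicographic order on the composite index $(i,\alpha)$ makes \eqref{LaLaLa} an instance of Lagrange's identity in $\mR^{dn}$. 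For the inequalities, passing to the frame $(\Ups,W)$ with $W=D\st\dZ X$ is licensed by the invariance of $J$, $K$, $\La$, $L$ stated in the theorem immediately preceding this one; your key identity $\dxi_\sigma=W_{\sigma\sigma}$ is the diagonal part of the decomposition $W=(D\st\dD)\Ups+\dUps+\Ups(\dX\st X)$, in which the products $(D\st\dD)\Ups$ and $\Ups(\dX\st X)$ have zero diagonal because skew-symmetric matrices do (this is precisely the content of Theorem~\ref{razlozheniya}, which is logically independent of Theorem~\ref{alt}, so no circularity arises). The remaining bookkeeping --- dropping terms whose indices exceed $\fm$, observing that the entries $W_{ij}$, $W_{ji}$ occurring in the $J$-sum (respectively the $K$-sum) are pairwise distinct off-diagonal entries of $W$, and extending from the generic stratum of simple positive singular values by the continuity convention of Remark~\ref{dot} --- is exactly what is needed for the Cauchy--Schwarz step to yield $J^2+L^2\leq\La^2$ and $K^2+L^2\leq\La^2$, and you flag and handle all of it.
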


\begin{zamechanie}\label{linear}
One of the important consequences of Theorem~\ref{alt} is as follows. The formulas \mbox{\eqref{JJ}--\eqref{LL}} imply that the numbers of components of the hyperangular momenta $J$, $K$, $\La$, and $L$ are equal respectively to $d(d-1)/2$, $n(n-1)/2$, $dn(dn-1)/2$, and $\fm(\fm-1)/2$. However, the alternative formulas~\eqref{KKK} and~\eqref{LaLaLa} show that $K$ and $\La$ can be calculated (like $J$) using no greater than $C_dn$ operations where the constant $C_d$ depends on the dimension $d$ only.
\end{zamechanie}

\section{Energy partitions}\label{partitions}

The instantaneous phase-space invariant
\begin{equation}
T=\frac{M}{2}\|\dZ\|^2
\label{TT}
\end{equation}
is called the \emph{total kinetic energy} (of the system of particles described by the position matrix $Z$ and its time derivative $\dZ$). Of course, the formula~\eqref{TT} coincides with~\eqref{T} if $\dZ$ is the time derivative of the position matrix in a Cartesian coordinate frame and $n=N$.

The main subject of this article is statistical properties of the terms of various partitions of the total kinetic energy $T$ (almost all of these partitions were introduced in our paper~\cite{A3} for $d=3$ and in the subsequent paper~\cite{A4} for an arbitrary dimension $d$ of the physical space). In order to define the partitions in question, one has to make oneself more acquainted with the action~\eqref{action} of the group $\rO(d)\times\rO(n)$~\cite{A4}.

Given a $\dtn$ matrix $Z$, consider the manifolds of all the matrices of the form
\[
\bigl\{ RZ \bigm| R\in\rSO(d) \bigr\}, \qquad \bigl\{ ZQ \bigm| Q\in\rSO(n) \bigr\}, \qquad \bigl\{ RZQ \bigm| R\in\rSO(d), \; Q\in\rSO(n) \bigr\}.
\]
Denote respectively the tangent spaces to these manifolds at point $Z$ by
\begin{align*}
\PiR(Z) &= \bigl\{ Z+\eR Z \bigm| \eR\in\fso(d) \bigr\}, \\
\PiQ(Z) &= \bigl\{ Z+Z\eQ \bigm| \eQ\in\fso(n) \bigr\}, \\
\Pi(Z) &= \bigl\{ Z+\eR Z+Z\eQ \bigm| \eR\in\fso(d), \; \eQ\in\fso(n) \bigr\}
\end{align*}
[recall that $\fso(d)$ and $\fso(n)$ are the spaces of skew-symmetric $d\times d$ and $n\times n$ matrices, respectively]. The multidimensional planes $\PiR(Z)$, $\PiQ(Z)$, and $\Pi(Z)$ are affine subspaces of the space $\mR^{\dtn}$ of $\dtn$ matrices.

\begin{theorem}\label{condition}
The intersection of the spaces $\PiR(Z)$ and $\PiQ(Z)$ consists of the only matrix $Z$:
\begin{align*}
\PiR(Z)\cap\PiQ(Z) = \{Z\} \quad\; &\Longleftrightarrow \quad\; \dim\PiR(Z)+\dim\PiQ(Z) = \dim\Pi(Z) \\
{} &\Longleftrightarrow \quad\; \Pi(Z)-Z = \bigl[ \PiR(Z)-Z \bigr]\oplus\bigl[ \PiQ(Z)-Z \bigr]
\end{align*}
if and only if all the \emph{positive} singular values of the matrix $Z$ are pairwise distinct.
\end{theorem}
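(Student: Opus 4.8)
The plan is to reduce everything to the linear algebra of the \emph{difference spaces}
\[
V_{\mathrm L}=\PiR(Z)-Z=\bigl\{ \eR Z \bigm| \eR\in\fso(d) \bigr\}, \qquad V_{\mathrm R}=\PiQ(Z)-Z=\bigl\{ Z\eQ \bigm| \eQ\in\fso(n) \bigr\},
\]
so that $\Pi(Z)-Z=V_{\mathrm L}+V_{\mathrm R}$. A matrix $W$ lies in $\PiR(Z)\cap\PiQ(Z)$ precisely when $W-Z\in V_{\mathrm L}\cap V_{\mathrm R}$, and $\dim\PiR(Z)=\dim V_{\mathrm L}$, $\dim\PiQ(Z)=\dim V_{\mathrm R}$, $\dim\Pi(Z)=\dim(V_{\mathrm L}+V_{\mathrm R})$. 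The three displayed conditions are therefore \emph{equivalent for purely linear-algebraic reasons}: by the Grassmann dimension formula $\dim(V_{\mathrm L}+V_{\mathrm R})=\dim V_{\mathrm L}+\dim V_{\mathrm R}-\dim(V_{\mathrm L}\cap V_{\mathrm R})$, each of them merely restates $V_{\mathrm L}\cap V_{\mathrm R}=\{0\}$. So the entire content of the theorem is to show that $V_{\mathrm L}\cap V_{\mathrm R}=\{0\}$ if and only if the positive singular values of $Z$ are pairwise distinct.

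First I would reduce to the case where $Z$ is diagonal. Writing an SVD $Z=D\Ups X\st$ as in~\eqref{SVD} with $D\in\rO(d)$, $X\in\rO(n)$, conjugation shows $\eR Z=D(D\st\eR D)\Ups X\st$ and $Z\eQ=D\Ups(X\st\eQ X)X\st$; since $D\st\eR D$ and $X\st\eQ X$ again range over all of $\fso(d)$ and $\fso(n)$ respectively, the linear isomorphism $A\mapsto DAX\st$ of $\mR^{\dtn}$ carries $V_{\mathrm L}(\Ups)$ onto $V_{\mathrm L}(Z)$ and $V_{\mathrm R}(\Ups)$ onto $V_{\mathrm R}(Z)$. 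Hence $V_{\mathrm L}(Z)\cap V_{\mathrm R}(Z)=\{0\}$ if and only if $V_{\mathrm L}(\Ups)\cap V_{\mathrm R}(\Ups)=\{0\}$, while the singular values are unchanged. Thus it suffices to treat $Z=\Ups$.

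The key step is then the explicit solution of $\eR\Ups=\Ups\eQ$. A direct computation gives $(\eR\Ups)_{i\alpha}=\eR_{i\alpha}\xi_\alpha$ for $\alpha\leq\fm$ (and $0$ for $\alpha>\fm$), whereas $(\Ups\eQ)_{i\alpha}=\xi_i\eQ_{i\alpha}$ for $i\leq\fm$ (and $0$ for $i>\fm$). Comparing entries, note first that the block $i>\fm$, $\alpha>\fm$ cannot occur because $\fm=\min(d,n)$, so at most one index can exceed $\fm$; in each of the two ``rectangular'' blocks one side is forced to vanish, and hence so is the corresponding entry of $W=\eR\Ups=\Ups\eQ$. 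On the diagonal $i=\alpha\leq\fm$ skew-symmetry gives $W_{ii}=0$. For an off-diagonal pair $i\neq\alpha$ with $i,\alpha\leq\fm$, the entries $(i,\alpha)$ and $(\alpha,i)$ yield, after using $\eR_{\alpha i}=-\eR_{i\alpha}$ and $\eQ_{\alpha i}=-\eQ_{i\alpha}$, the homogeneous system
\[
\xi_\alpha\eR_{i\alpha}=\xi_i\eQ_{i\alpha}, \qquad \xi_i\eR_{i\alpha}=\xi_\alpha\eQ_{i\alpha},
\]
whose determinant in the unknowns $(\eR_{i\alpha},\eQ_{i\alpha})$ equals $\xi_i^2-\xi_\alpha^2$. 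Since the $\xi$'s are nonnegative, a nonzero $W_{i\alpha}=\xi_\alpha\eR_{i\alpha}$ is possible exactly when $\xi_i=\xi_\alpha>0$. Collecting the blocks, $V_{\mathrm L}(\Ups)\cap V_{\mathrm R}(\Ups)\neq\{0\}$ if and only if two of the singular values coincide and are positive, which is the asserted criterion. (Incidentally, this also yields the dimension of the intersection as $\sum_k\binom{\mu_k}{2}$, where $\mu_k$ are the multiplicities of the distinct positive singular values.)

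The main obstacle is only the bookkeeping in this last step: one must keep track of the two rectangular blocks where $i$ or $\alpha$ exceeds $\fm$ and verify that \emph{zero} singular values, even when repeated, never contribute to the intersection, so that solely repeated \emph{positive} singular values matter. Everything else is elementary.
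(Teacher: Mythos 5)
Your proof is correct and complete: the reduction of all three displayed conditions to $V_{\mathrm L}\cap V_{\mathrm R}=\{0\}$ via the Grassmann dimension formula, the SVD equivariance argument $\eR Z=D(D\st\eR D)\Ups X\st$, $Z\eQ=D\Ups(X\st\eQ X)X\st$ reducing to diagonal $\Ups$, and the entrywise $2\times 2$ analysis of $\eR\Ups=\Ups\eQ$ (with determinant $\xi_i^2-\xi_\alpha^2$, so that nonzero solutions exist exactly for repeated \emph{positive} singular values, zero singular values never contributing) are all sound, including the careful treatment of the rectangular blocks. Note that the present paper does not actually prove Theorem~\ref{condition} --- it is reproduced from~\cite{A4} --- but your argument is the natural one within the paper's SVD framework (compare Theorem~\ref{razlozheniya}), and your closing observation that $\dim\bigl( V_{\mathrm L}\cap V_{\mathrm R}\bigr)$ equals the number of pairs of coincident positive singular values is a worthwhile refinement.
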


Given now a $\dtn$ matrix $\dZ$, denote by $\ZR$, $\ZQ$, and $\Zrot$ the \emph{orthogonal projections} of $\dZ$ onto the spaces $\PiR(Z)$, $\PiQ(Z)$, and $\Pi(Z)$, respectively, in the sense of the Frobenius inner product~\eqref{Frobenius}. Let also $\ZI=\dZ-\Zrot$ denote the component of $\dZ$ orthogonal to $\Pi(Z)$.

Introduce the following non-negative quantities:
\begin{align*}
\TL &= \frac{\La^2}{2M\rho^2} & &\text{called the \emph{grand angular energy}}, \\
\Trho &= \frac{M\drho^2}{2} & &\text{called the \emph{hyperradial energy}}, \\
\Trot &= \frac{M}{2}\|\Zrot\|^2 & &\text{called the \emph{rotational energy}}, \\
\TI &= \frac{M}{2}\|\ZI\|^2 & &\text{called the \emph{inertial energy}}, \\
\Txi &= \frac{L^2}{2M\rho^2} & &\text{called the \emph{shape energy}},
\end{align*}
$\TL$, $\Trho$, and $\Txi$ being defined for $\rho>0$ only (concerning $\drho$, see Remark~\ref{dot}).

\begin{theorem}\label{So}
There hold the identities and inequalities
\begin{gather}
T=\TL+\Trho,
\label{Smith} \\
T=\Trot+\TI,
\label{orthogonal} \\
\Trot\leq\TL\leq T, \quad \Trho\leq\TI\leq T, \qquad \TL-\Trot=\TI-\Trho=\Txi\leq\min(\TL,\TI),
\nonumber \\
\Trho = \frac{M}{2\rho^2}\left( \sum_{\sigma=1}^{\fm} \xi_\sigma\dxi_\sigma \right)^2, \qquad \TI = \frac{M}{2}\sum_{\sigma=1}^{\fm} \dxi_\sigma^2.
\nonumber
\end{gather}
\end{theorem}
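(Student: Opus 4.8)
The plan is to establish the two additive decompositions \eqref{Smith} and \eqref{orthogonal} first, and then read off every remaining (in)equality from them. For \eqref{Smith} I would start from the alternative expression $\La^2=M^2\bigl(\|Z\|^2\|\dZ\|^2-\langle Z,\dZ\rangle^2\bigr)$ supplied by Theorem~\ref{alt}, together with $\drho=\langle Z,\dZ\rangle/\rho$ (obtained by differentiating $\rho^2=\langle Z,Z\rangle$). Substituting these into $\TL=\La^2/(2M\rho^2)$ and $\Trho=M\drho^2/2$ gives $\TL=\frac M2\|\dZ\|^2-\frac{M\langle Z,\dZ\rangle^2}{2\rho^2}$ and $\Trho=\frac{M\langle Z,\dZ\rangle^2}{2\rho^2}$, whose sum is exactly $T=\frac M2\|\dZ\|^2$. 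The same computation, now using $\rho^2=\sum_\sigma\xi_\sigma^2$ so that $\drho=\rho^{-1}\sum_\sigma\xi_\sigma\dxi_\sigma$, yields the stated formula $\Trho=\frac{M}{2\rho^2}\bigl(\sum_\sigma\xi_\sigma\dxi_\sigma\bigr)^2$. Identity~\eqref{orthogonal} is then the Pythagorean theorem: writing $V=\Pi(Z)-Z=\{\eR Z+Z\eQ:\eR\in\fso(d),\,\eQ\in\fso(n)\}$ for the linear subspace of rotational directions, $\Zrot$ is the orthogonal projection of $\dZ$ onto $V$ and $\ZI=\dZ-\Zrot$ its complement, so $\|\dZ\|^2=\|\Zrot\|^2+\|\ZI\|^2$, i.e.\ $T=\Trot+\TI$; in particular $\Trot\geq0$ and $\TI\geq0$.

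The core of the argument is the evaluation of $\TI$. By the $\rO(d)\times\rO(n)$-invariance of all quantities involved (the direction space transforms as $\Pi(RZQ\st)-RZQ\st=R\,[\Pi(Z)-Z]\,Q\st$ and the Frobenius product is invariant, so $\Zrot$ and $\ZI$ transform covariantly and $\TI,\Trot$ are invariant), I may assume that $Z=\Ups$ is already in the diagonal form~\eqref{SVD}. Assuming first that the positive singular values are pairwise distinct (the genericity condition of Theorem~\ref{condition}), I would pick a smooth signed singular value decomposition $\fZ(t)=D(t)\Ups(t)X(t)\st$ of the perturbation curve of Remark~\ref{dot}, with $D(0)=I$, $X(0)=I$, and differentiate at $t=0$: writing $\dZ=\dD\,\Ups+\dUps+\Ups\,\dX\st$, the matrices $\dD,\dX$ are skew-symmetric, so $\dD\,\Ups\in\{\eR\Ups\}$ and $\Ups\,\dX\st\in\{\Ups\eQ\}$ both lie in $V$, whereas $\dUps=\diag(\dxi_1,\dots,\dxi_{\fm})$ is orthogonal to $V$ because every $\eR\Ups+\Ups\eQ$ has vanishing diagonal entries $\xi_\sigma(\eR_{\sigma\sigma}+\eQ_{\sigma\sigma})=0$. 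Hence this is precisely the orthogonal splitting, $\ZI=\dUps$, and therefore $\TI=\frac M2\|\dUps\|^2=\frac M2\sum_\sigma\dxi_\sigma^2$, as claimed.

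It remains to assemble the chain of (in)equalities. Subtracting \eqref{orthogonal} from \eqref{Smith} gives $\TL-\Trot=\TI-\Trho$ with no further work, and for the common value I would compute, using the two formulas just proved,
\[
\TI-\Trho=\frac{M}{2\rho^2}\Bigl[\rho^2\sum_\sigma\dxi_\sigma^2-\bigl(\sum_\sigma\xi_\sigma\dxi_\sigma\bigr)^2\Bigr]=\frac{M}{2\rho^2}\sum_{\sigma<\tau}(\xi_\sigma\dxi_\tau-\xi_\tau\dxi_\sigma)^2=\frac{L^2}{2M\rho^2}=\Txi,
\]
where the middle step is the Lagrange identity applied to the vectors $(\xi_\sigma)$ and $(\dxi_\sigma)$, combined with $\rho^2=\sum_\sigma\xi_\sigma^2$ and the definition~\eqref{LL} of $L$. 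All remaining inequalities are then immediate from non-negativity: $\TL\leq T$ and $\TI\leq T$ follow from $\Trho\geq0$ and $\Trot\geq0$ in the two decompositions, while $\Trot\leq\TL$, $\Trho\leq\TI$ and $\Txi\leq\min(\TL,\TI)$ follow from $\Txi=\TL-\Trot=\TI-\Trho\geq0$ together with $\Trot,\Trho\geq0$.

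The main obstacle will be the smooth-SVD step in the second paragraph: a differentiable choice of the orthogonal factors $D(t),X(t)$ exists only where the positive singular values are simple, and $\dxi_\sigma$ itself is ambiguous at coincidences or at zero singular values. I would dispose of this exactly as prescribed in Remark~\ref{dot}: the degenerate configurations form a set of positive codimension, all six energies are continuous functions of $(Z,\dZ)$ on $\{\rho>0\}$, so the identities and inequalities, once verified on the dense open set where the positive singular values are pairwise distinct, extend to all admissible $(Z,\dZ)$ by continuity.
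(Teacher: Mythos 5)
Your proposal is correct, and on the two identities that the paper actually proves in the text it follows the very same route: \eqref{orthogonal} is the Pythagorean theorem for the splitting $\dZ=\Zrot+\ZI$, and \eqref{Smith} follows from the alternative expression \eqref{LaLaLa} for $\La^2$ together with $\rho\drho=\langle Z,\dZ\rangle$. The genuine difference is self-containedness: the remaining assertions of Theorem~\ref{So} (the singular-value formulas for $\Trho$ and $\TI$, and the chain $\TL-\Trot=\TI-\Trho=\Txi\geq 0$) are not proven in this paper at all, being imported from \cite{A4}; in particular, your identification $\ZI=\dUps$ in the frame where $Z=\Ups$ is exactly the content of Theorem~\ref{razlozheniya}, which the paper states without proof. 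Your derivation of it is sound (reduction to diagonal $Z$ by $\rO(d)\times\rO(n)$-invariance, skewness of $\dD$ and $\dX$, and the observation that every $\eR\Ups+\Ups\eQ$ has vanishing diagonal, so the diagonal matrix $\dUps$ is Frobenius-orthogonal to $\Pi(Z)-Z$), and the Lagrange-identity step turning $\TI-\Trho$ into $L^2/(2M\rho^2)=\Txi$ via \eqref{LL} is a clean way to close the chain; both go beyond what the paper supplies. One caveat concerns your closing continuity argument: $\Trot$ and $\TI$, defined through orthogonal projection onto $\Pi(Z)$, are \emph{not} continuous at matrices with coincident positive singular values, because $\dim\Pi(Z)$ drops there. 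For instance, with $d=n=2$, $Z$ the identity and $\dZ$ symmetric with zero diagonal, one gets $\Trot=0$, whereas $\Trot=\frac{M}{2}\|\dZ\|^2$ at all nearby generic $Z$; so the identities cannot literally be ``extended by continuity'' to such points (they do hold there, but by a direct computation with a signed SVD along the curve, not as limits). What your argument honestly establishes is the theorem on the dense open set of pairs $(Z,\dZ)$ with pairwise distinct positive singular values, which is the same level of rigor the paper itself adopts through the convention of Remark~\ref{dot}.
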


\begin{definition}
The equality $T=\TL+\Trho$ is called the \emph{Smith decomposition} of the total kinetic energy $T$. The equality $T=\Trot+\TI$ is called the \emph{orthogonal decomposition} of the total kinetic energy $T$.
\end{definition}

The Smith decomposition~\eqref{Smith} was deduced for the first time by Smith~\cite{FTSmith} in the particular case $N=d=3$.

The identity~\eqref{orthogonal} is in fact just the Pythagorean theorem: it follows immediately from the relations $\dZ=\Zrot+\ZI$ and $\langle\Zrot,\ZI\rangle=0$. The identity~\eqref{Smith} is an immediate consequence of the equality~\eqref{LaLaLa}. Indeed, since $\rho=\|Z\|$, one has $\rho\drho=\langle Z,\dZ\rangle$. Now
\[
\frac{2T}{M} = \|\dZ\|^2 = \frac{\|Z\|^2\|\dZ\|^2-\langle Z,\dZ\rangle^2}{\rho^2}+\frac{\langle Z,\dZ\rangle^2}{\rho^2} = \frac{\La^2}{M^2\rho^2}+\drho^2 = \frac{2\TL}{M}+\frac{2\Trho}{M}.
\]

Introduce now the non-negative quantities
\begin{align*}
\Text &= \frac{M}{2}\|\ZR\|^2 & &\text{called the \emph{external energy}}, \\
\Tint &= \frac{M}{2}\|\ZQ\|^2 & &\text{called the \emph{internal energy}}, \\
\TJ &= \frac{J^2}{2M\rho^2} & &\text{called the \emph{outer angular energy}}, \\
\TK &= \frac{K^2}{2M\rho^2} & &\text{called the \emph{inner angular energy}}
\end{align*}
as well as the quantities
\begin{align*}
\Tres &= \Trot-\Text-\Tint & &\text{called the \emph{residual energy}}, \\
\Tac &= \Trot-\TJ-\TK & &\text{called the \emph{angular coupling energy}},
\end{align*}
$\TJ$, $\TK$, and $\Tac$ being defined for $\rho>0$ only. The energies $\Tres$ and $\Tac$ can be negative.

\begin{definition}
The equality
\begin{equation}
\Trot=\Text+\Tint+\Tres
\label{projective}
\end{equation}
is called the \emph{projective partition} of the rotational energy $\Trot$. The equality
\begin{equation}
\Trot=\TJ+\TK+\Tac
\label{hyperspherical}
\end{equation}
is called the \emph{hyperspherical partition} of the rotational energy $\Trot$.
\end{definition}

\begin{theorem}\label{TJText}
There hold the inequalities
\begin{gather*}
\TJ+\Txi\leq\TL, \qquad \TK+\Txi\leq\TL, \\
\TJ\leq\Text\leq\Trot, \qquad \TK\leq\Tint\leq\Trot, \qquad \Tres\leq\Tac.
\end{gather*}
If $d\leq 2$ or $n=1$ then $\TJ=\Text$. Recall that the case $n=1$ corresponds to the situations where either there is only one particle \textup($N=1$\textup) or the system consists of two particles \textup($N=2$\textup) whose center-of-mass is fixed and coincides with the origin. If $d=1$ or $n\leq 2$ then $\TK=\Tint$. Recall that the case $n=2$ corresponds to the situations where either there are two particles \textup($N=2$\textup) or the system consists of three particles \textup($N=3$\textup) whose center-of-mass is fixed and coincides with the origin.
\end{theorem}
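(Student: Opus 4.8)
The plan is to diagonalize and read everything off block-by-block. Since all of $\TJ,\TK,\Text,\Tint,\Trot,\TL,\Txi$ are $\rO(d)\times\rO(n)$-invariant, I would first apply a (signed) SVD $Z=D\Ups X\st$ and pass to the pair $(\Ups,W)$ with $W=D\st\dZ X$, which leaves every quantity unchanged. Differentiating $Z(t)=D(t)\Ups(t)X(t)\st$ at $t=0$ gives $W=\eR\Ups+\dUps+\Ups\eQ$ with $\eR=D\st\dot D\in\fso(d)$ and $\eQ=\dot X\st X\in\fso(n)$, so the diagonal of $W$ is $W_{\sigma\sigma}=\dxi_\sigma$ and the off-diagonal of $W$ carries the orbit directions. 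Throughout I would work on the open dense set where the $\fm$ singular values are positive and pairwise distinct and recover the general case at the end by the continuity convention of Remark~\ref{dot}.

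Next I would pin down the geometry of $\PiR(Z)-Z=\{\eR\Ups\}$ and $\PiQ(Z)-Z=\{\Ups\eQ\}$ in this frame. Writing entries out, $(\eR\Ups)_{i\beta}=\eR_{i\beta}\xi_\beta$ for $\beta\le\fm$ and $0$ for $\beta>\fm$, while $(\Ups\eQ)_{i\beta}=\xi_i\eQ_{i\beta}$ for $i\le\fm$ and $0$ for $i>\fm$. Hence these two tangent spaces occupy mutually Frobenius-orthogonal blocks: for a pair $s<t\le\fm$ the plane spanned by $(W_{st},W_{ts})$ receives the single ``left'' direction $(\xi_t,-\xi_s)$ and the single ``right'' direction $(\xi_s,-\xi_t)$; an entry $W_{it}$ with $\fm<i\le d$ (present only if $d>n$) is reached by $\eR$ alone, and an entry $W_{s\beta}$ with $\fm<\beta\le n$ (present only if $n>d$) by $\eQ$ alone. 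Projecting $W$ block-by-block then yields
\[
\Text=\frac{M}{2}\left[ \sum_{1\le s<t\le\fm}\frac{(\xi_t W_{st}-\xi_s W_{ts})^2}{\xi_s^2+\xi_t^2} + \sum_{\substack{\fm<i\le d\\ 1\le t\le\fm}} W_{it}^2 \right],
\]
and the dual formula for $\Tint$, with numerator $(\xi_s W_{st}-\xi_t W_{ts})^2$ on the $(s,t)$-blocks and a column-block $\sum_{\fm<\beta\le n,\,s\le\fm}W_{s\beta}^2$.

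The same substitution in \eqref{JJ} and \eqref{KK} gives $J_{st}=M(\xi_s W_{ts}-\xi_t W_{st})$, $J_{ti}=M\xi_t W_{it}$ (row-block), $K_{st}=M(\xi_s W_{st}-\xi_t W_{ts})$, $K_{s\beta}=M\xi_s W_{s\beta}$ (column-block), and $0$ elsewhere. Thus $\TJ=J^2/(2M\rho^2)$ and $\TK=K^2/(2M\rho^2)$ have exactly the same block structure as $\Text$ and $\Tint$, except that each block of $\TJ,\TK$ carries an extra factor that is $\le1$: namely $(\xi_s^2+\xi_t^2)/\rho^2$ on an $(s,t)$-block and $\xi_t^2/\rho^2$ (resp.\ $\xi_s^2/\rho^2$) on a row- (resp.\ column-) block, because $\rho^2=\sum_\sigma\xi_\sigma^2\ge\xi_s^2+\xi_t^2\ge\xi_t^2$. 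A term-by-term comparison then gives $\TJ\le\Text$ and $\TK\le\Tint$. Since $\PiR(Z)\subseteq\Pi(Z)\supseteq\PiQ(Z)$, an orthogonal projection onto the smaller space cannot exceed in norm the projection onto $\Pi(Z)$, so $\Text\le\Trot$ and $\Tint\le\Trot$; combined with the identity $\TL=\Trot+\Txi$ of Theorem~\ref{So} this yields $\TJ+\Txi\le\Trot+\Txi=\TL$ and likewise $\TK+\Txi\le\TL$. Finally, adding $\TJ\le\Text$ and $\TK\le\Tint$ gives $\TJ+\TK\le\Text+\Tint$, which is exactly $\Tres\le\Tac$.

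For the equality statements I would read off that $\Text-\TJ$ is a sum of the $(s,t)$-block squares weighted by $1-(\xi_s^2+\xi_t^2)/\rho^2$ and the row-block squares weighted by $1-\xi_t^2/\rho^2$, both $\ge0$. These weights vanish identically precisely in the claimed ranges: if $d\le2$ then $\fm\le2$, so the only possible $(s,t)$-block is $(1,2)$ with $\xi_1^2+\xi_2^2=\rho^2$, and the row-block is either absent or, when present, forces $\fm=1$ and hence $\rho^2=\xi_1^2$; if $n=1$ then $\fm=1$, the $(s,t)$-blocks are absent and $\rho^2=\xi_1^2$ kills the row-block weights. Either way $\TJ=\Text$, and exchanging the roles of rows and columns ($d\leftrightarrow n$, row-block $\leftrightarrow$ column-block) gives $\TK=\Tint$ when $d=1$ or $n\le2$. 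I expect the main obstacle to be not the inequalities themselves—these become one-line comparisons once the blocks are set up—but rather establishing the block form of the projection onto $\PiR,\PiQ$ correctly (in particular that a row/column block is reached by only one of the two rotation groups), together with the careful passage from the generic stratum to all $(Z,\dZ)$ by continuity.
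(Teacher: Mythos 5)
Your proposal is correct in substance, but note first that the paper itself never proves Theorem~\ref{TJText}: like the rest of the material of Sections~2--4, it is quoted from~\cite{A4}, so there is no in-paper proof to compare against. Your route --- use invariance to replace $(Z,\dZ)$ by $(\Ups,W)$ with $W=D\st\dZ X$, identify the block structure of $\PiR(Z)-Z$ and $\PiQ(Z)-Z$, compute $J$ and $K$ in the same frame, and compare block by block --- is the natural invariance argument, and its ingredients all check out: the lines spanned by $(\xi_t,-\xi_s)$ and $(\xi_s,-\xi_t)$ in each $(s,t)$-plane, the row blocks reached by $\fso(d)$ only and column blocks reached by $\fso(n)$ only, the components $J_{st}=M(\xi_sW_{ts}-\xi_tW_{st})$, $J_{ti}=M\xi_tW_{it}$, $K_{st}=M(\xi_sW_{st}-\xi_tW_{ts})$, $K_{s\beta}=M\xi_sW_{s\beta}$, the damping factors $(\xi_s^2+\xi_t^2)/\rho^2$, $\xi_t^2/\rho^2$, $\xi_s^2/\rho^2$, all at most $1$, the monotonicity $\Text\leq\Trot$ and $\Tint\leq\Trot$ from $\PiR(Z)-Z\subseteq\Pi(Z)-Z\supseteq\PiQ(Z)-Z$, the step $\TJ+\Txi\leq\Trot+\Txi=\TL$ via Theorem~\ref{So}, the subtraction giving $\Tres\leq\Tac$, and the vanishing of all weights exactly when $d\leq2$ or $n=1$ (respectively $d=1$ or $n\leq2$).

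The only step I would reject as written is the one you yourself flag: recovering degenerate $Z$ ``by continuity''. That works for $\TJ$, $\TK$, $\TL$, which are continuous on $\{\rho>0\}$, but not for the projection energies: $\Text$, $\Tint$, $\Trot$ are genuinely \emph{discontinuous} at rank-deficient $Z$, where $\PiR(Z)-Z$ and $\PiQ(Z)-Z$ drop dimension. For instance, with $d=3$, $n=2$, let $\Ups_\vare$ have $(\Ups_\vare)_{11}=1$, $(\Ups_\vare)_{22}=\vare$ and all other entries zero, and let $W$ have $W_{32}=1$ as its only nonzero entry; then $\Text(\Ups_\vare,W)=M/2$ for every $\vare>0$, while $\Text(\Ups_0,W)=0$. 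So $\Text$ is only lower semicontinuous, and passing to the limit in the generic inequality bounds $\TJ(\Ups_0,W)$ by $M/2$ rather than by $\Text(\Ups_0,W)$; moreover, the continuity convention of Remark~\ref{dot} concerns derivative-type quantities such as $\dxi_\sigma$, not these projections. The repair costs nothing: your block description of $\{\eR\Ups\}$ and $\{\Ups\eQ\}$ and the identities $J=M(\Ups W\st-W\Ups\st)$, $K=M(\Ups\st W-W\st\Ups)$ are pure algebra, valid for \emph{every} $\Ups$. In an $(s,t)$-block with $\xi_s=\xi_t=0$ both sides of the comparison vanish; if exactly one singular value vanishes the line degenerates to a coordinate line and the comparison still holds; coincident positive singular values merely make $\PiR(Z)$ and $\PiQ(Z)$ intersect nontrivially (Theorem~\ref{condition}), which harms none of these inequalities --- distinctness matters for the singular value expansion, not here. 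Genericity is needed only for the identification $W_{\sigma\sigma}=\dxi_\sigma$, which your argument never actually invokes. So drop the limiting argument altogether and run the same block comparison pointwise at every $Z$ with $\rho>0$; then the proof is complete.
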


Finally, suppose that all the \emph{positive} singular values of the matrix $Z$ are pairwise distinct (besides these positive singular values, the matrix $Z$ is allowed to possess zero singular value of arbitrary multiplicity). Then, according to Theorem~\ref{condition}, the component $\Zrot\in\bigl[ \Pi(Z)-Z \bigr]$ of $\dZ$ can be uniquely decomposed as
\[
\Zrot=\Zout+\Zin, \qquad \Zout\in\bigl[ \PiR(Z)-Z \bigr], \quad \Zin\in\bigl[ \PiQ(Z)-Z \bigr].
\]
Introduce now the non-negative quantities
\begin{align*}
\Eout &= \frac{M}{2}\|\Zout\|^2 & &\text{called the \emph{tangent} (or \emph{singular}) \emph{external energy}}, \\
\Ein &= \frac{M}{2}\|\Zin\|^2 & &\text{called the \emph{tangent} (or \emph{singular}) \emph{internal energy}}
\end{align*}
as well as the quantity
\[
\Ec = \Trot-\Eout-\Ein \qquad \text{called the \emph{tangent} (or \emph{singular}, or \emph{Coriolis}) \emph{coupling energy}}
\]
which can be negative. In our previous papers \cite{A3,A4}, the quantities $\Eout$, $\Ein$, and $\Ec$ were called respectively the \emph{outer term}, the \emph{inner term}, and the \emph{coupling}. The energy $\Ec$ was called the Coriolis coupling term in~\cite{Marsden1} (see also~\cite{Marsden2}).

\begin{definition}
The equality
\begin{equation}
\Trot=\Eout+\Ein+\Ec
\label{singular}
\end{equation}
is called the \emph{singular value expansion} (or just \emph{singular expansion}) of the rotational energy $\Trot$.
\end{definition}

The singular value expansion~\eqref{singular} is defined only if all the positive singular values of the matrix $Z$ are pairwise distinct. However, this condition is met for generic matrices $Z\in\mR^{\dtn}$~\cite{A4}. Some partitions of the total kinetic energy $T$ essentially equivalent to the equality $T=\Eout+\Ein+\Ec+\TI$ have been known long ago (see e.g.\ the papers \cite{Eckart,Chapuisat}).

\begin{theorem}\label{razlozheniya}
Let $Z=D\Ups X\st$ be a \emph{signed} SVD~\eqref{SVD} of a $\dtn$ matrix $Z$. Suppose that all the three derivatives $\dD$, $\dUps$, and $\dX$ are well defined \textup(see Remark~\ref{dot}\textup). Then
\begin{gather*}
\dD\in\fso(d), \quad \dX\in\fso(n), \qquad \dZ=\dD\Ups X\st+D\dUps X\st+D\Ups\dX\st, \\
\Zrot=\dD\Ups X\st+D\Ups\dX\st, \qquad \ZI=D\dUps X\st, \\
(Z+\dD\Ups X\st)\in\PiR(Z), \qquad (Z+D\Ups\dX\st)\in\PiQ(Z).
\end{gather*}
In particular, if all the positive singular values of the matrix $Z$ are pairwise distinct then
\[
\Zout=\dD\Ups X\st, \qquad \Zin=D\Ups\dX\st.
\]
\end{theorem}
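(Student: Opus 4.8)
The plan is to realise the signed SVD as a differentiable curve and then differentiate. By the hypothesis that $\dD$, $\dUps$, $\dX$ are well defined, I may choose a differentiable curve $\fZ(t)=D(t)\Ups(t)X(t)\st$ of signed SVDs, with $D(t)\in\rO(d)$, $X(t)\in\rO(n)$, and $\Ups(t)$ diagonal, realising the pair $(Z,\dZ)$ at $t=0$ (so $\fZ(0)=Z$, $D(0)=D$, $\Ups(0)=\Ups$, $X(0)=X$). The product rule then immediately yields $\dZ=\dD\Ups X\st+D\dUps X\st+D\Ups\dX\st$. Differentiating the constraints $DD\st=I$ and $XX\st=I$ at $t=0$ gives $\dD D\st+D\dD\st=0$ and $\dX X\st+X\dX\st=0$, i.e.\ the skew-symmetry of $\dD D\st\in\fso(d)$ and of $X\dX\st\in\fso(n)$. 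Since $\Ups(t)$ stays diagonal, $\dUps$ is supported on the diagonal as well, a fact used crucially below.

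Next I would locate the first and third summands of $\dZ$ in the correct tangent spaces. Writing $\dD\Ups X\st=(\dD D\st)Z$ and $D\Ups\dX\st=Z(X\dX\st)$ and using the skew-symmetry just obtained, I get $\dD\Ups X\st\in\PiR(Z)-Z$ and $D\Ups\dX\st\in\PiQ(Z)-Z$; equivalently $(Z+\dD\Ups X\st)\in\PiR(Z)$ and $(Z+D\Ups\dX\st)\in\PiQ(Z)$, two of the asserted inclusions. Their sum hence lies in $\Pi(Z)-Z$ and is the candidate for $\Zrot$, leaving $D\dUps X\st$ as the candidate for $\ZI=\dZ-\Zrot$.

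The heart of the argument, which I expect to be the main obstacle, is to check that $\ZI=D\dUps X\st$ is Frobenius-orthogonal to all of $\Pi(Z)-Z$; this is precisely what certifies $\Zrot$ and $\ZI$ as the genuine orthogonal projection and its complement. Exploiting the invariance $\langle DAX\st,DBX\st\rangle=\langle A,B\rangle$, I would reduce $\langle D\dUps X\st,\eR Z\rangle$ to $\langle\dUps,\tilde\eR\Ups\rangle$ with $\tilde\eR=D\st\eR D\in\fso(d)$, and $\langle D\dUps X\st,Z\eQ\rangle$ to $\langle\dUps,\Ups\tilde\eQ\rangle$ with $\tilde\eQ=X\st\eQ X\in\fso(n)$. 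The key point is that $\dUps\Ups\st$ and $\Ups\st\dUps$ are \emph{diagonal} $d\times d$ and $n\times n$ matrices (this is where the diagonal support of $\dUps$ enters), so that their traces against the skew matrices $\tilde\eR$ and $\tilde\eQ$ vanish. Both inner products are therefore zero, $\ZI$ is orthogonal to $\Pi(Z)-Z$, and the identifications $\Zrot=\dD\Ups X\st+D\Ups\dX\st$ and $\ZI=D\dUps X\st$ follow.

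For the concluding ``in particular'' clause I would invoke Theorem~\ref{condition}: when all positive singular values are pairwise distinct, $\Pi(Z)-Z=[\PiR(Z)-Z]\oplus[\PiQ(Z)-Z]$ is a direct sum, so the splitting $\Zrot=\Zout+\Zin$ into its components in $\PiR(Z)-Z$ and $\PiQ(Z)-Z$ is unique. Since the second paragraph already exhibits $\dD\Ups X\st$ in the first summand and $D\Ups\dX\st$ in the second, uniqueness forces $\Zout=\dD\Ups X\st$ and $\Zin=D\Ups\dX\st$, which completes the proof.
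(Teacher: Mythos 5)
Your proposal is correct, and there is nothing in this paper to compare it against: Theorem~\ref{razlozheniya} belongs to the material of Sections~\ref{invariants}--\ref{partitions} that is reproduced \emph{without proof} from \cite{A3,A4} (the paper only proves Theorem~\ref{outin}, Lemma~\ref{Pr}, and the ``physical'' argument for Conjecture~\ref{main}). Your argument is the natural one and it is complete: differentiate the factorization to get $\dZ=\dD\Ups X\st+D\dUps X\st+D\Ups\dX\st$; rewrite $\dD\Ups X\st=(\dD D\st)Z$ and $D\Ups\dX\st=Z(X\dX\st)$ to place these two summands in $\PiR(Z)-Z$ and $\PiQ(Z)-Z$; verify that $D\dUps X\st$ is Frobenius-orthogonal to all of $\Pi(Z)-Z$ via $\langle DAX\st,DBX\st\rangle=\langle A,B\rangle$ and the vanishing of the trace of a diagonal matrix against a skew-symmetric one (this is where the diagonal support of $\dUps$, i.e.\ the signed-SVD structure, enters); conclude the identification of $\Zrot$ and $\ZI$ from the uniqueness of the decomposition relative to the linear subspace $\Pi(Z)-Z$; and obtain the final clause from the direct-sum statement of Theorem~\ref{condition}.

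One point deserves explicit attention. The theorem literally asserts $\dD\in\fso(d)$ and $\dX\in\fso(n)$, whereas what your differentiation of $DD\st=I$ and $XX\st=I$ yields---correctly---is $\dD D\st\in\fso(d)$ and $X\dX\st\in\fso(n)$. The literal assertion is false for general orthogonal factors $D$, $X$; it becomes true after normalizing $D$ and $X$ to be identity matrices, which is legitimate because all the quantities involved are instantaneous phase-space invariants and the choice $R=D\st$, $Q=X\st$ in the action~\eqref{action} replaces $Z=D\Ups X\st$ by $\Ups$. This is exactly the ``without loss of generality'' step the paper itself performs in its proof of Theorem~\ref{outin}, and you should either add it or restate the first claim in the conjugated form. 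Since every subsequent assertion of the theorem only ever uses the combinations $(\dD D\st)Z$ and $Z(X\dX\st)$, your version carries the full content, and no other step of your proof needs repair.
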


\begin{theorem}
All the $14$ energy quantities $\TL$, $\Trho$, $\Trot$, $\TI$, $\Txi$, $\Text$, $\Tint$, $\TJ$, $\TK$, $\Tres$, $\Tac$, $\Eout$, $\Ein$, and $\Ec$ introduced in this section are instantaneous phase-space invariants.
\end{theorem}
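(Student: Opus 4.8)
The plan is to sort the fourteen energies into an ``algebraic'' group and a ``geometric'' group. By the theorems of Section~\ref{momenta}, the hyperangular momenta $J$, $K$, $\La$, $L$, the hyperradius $\rho=\|Z\|$, and the total kinetic energy $T=\frac{M}{2}\|\dZ\|^2$ are instantaneous phase-space invariants; as any function of such invariants (together with the parameter $M$) is again one, the quantities $\TL=\La^2/(2M\rho^2)$, $\Txi=L^2/(2M\rho^2)$, $\TJ=J^2/(2M\rho^2)$, and $\TK=K^2/(2M\rho^2)$ are invariants at once. Invoking the identities of Theorem~\ref{So} I would then write $\Trho=T-\TL$, $\Trot=\TL-\Txi$, $\TI=T-\Trot$, and $\Tac=\Trot-\TJ-\TK$, each a combination of invariants; this settles eight of the fourteen without any reference to the projections.

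The heart of the matter is the four genuinely ``projected'' energies $\Text=\frac{M}{2}\|\ZR\|^2$, $\Tint=\frac{M}{2}\|\ZQ\|^2$, $\Eout=\frac{M}{2}\|\Zout\|^2$, $\Ein=\frac{M}{2}\|\Zin\|^2$, where (as confirmed by Theorem~\ref{razlozheniya}) $\ZR$, $\ZQ$, $\Zrot$ are the Frobenius-orthogonal projections of $\dZ$ onto the linear direction spaces $\PiR(Z)-Z$, $\PiQ(Z)-Z$, $\Pi(Z)-Z$, and $\Zrot=\Zout+\Zin$ splits $\Zrot$ along the direct sum of Theorem~\ref{condition}. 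The key remark is that $\Phi_{R,Q}\colon W\mapsto RWQ\st$ is a Frobenius isometry of $\mR^{\dtn}$ (by cyclicity of the trace together with $R\st R=I$ and $Q\st Q=I$) that carries these direction spaces at $Z$ to the ones at $RZQ\st$: substituting $\eR\mapsto R\eR R\st$ and $\eQ\mapsto Q\eQ Q\st$, which are bijections of $\fso(d)$ and $\fso(n)$, yields $R\{\eR Z\}Q\st=\{\eR(RZQ\st)\}$ and $R\{Z\eQ\}Q\st=\{(RZQ\st)\eQ\}$. Since an isometry intertwines orthogonal projection with the projection onto the image subspace, the projections of $R\dZ Q\st$ at $RZQ\st$ are $R\ZR Q\st$, $R\ZQ Q\st$, $R\Zrot Q\st$, whose Frobenius norms equal those of $\ZR$, $\ZQ$, $\Zrot$; this gives $\rO(d)\times\rO(n)$-invariance of $\Text$ and $\Tint$. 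For $\Eout$, $\Ein$ one adds that when the positive singular values are pairwise distinct (a property preserved by the action, the singular values being invariants) the decomposition $\Zrot=\Zout+\Zin$ is the unique one along $[\PiR(Z)-Z]\oplus[\PiQ(Z)-Z]$, so equivariance of this pair of subspaces under $\Phi_{R,Q}$ forces $\Phi_{R,Q}(\Zout)$, $\Phi_{R,Q}(\Zin)$ to be the corresponding components at $RZQ\st$, again preserving norms.

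It remains to check the column-augmentation axiom of Definition~\ref{ipsi}, and here I expect the one real subtlety. Setting $\tilde Z=(Z\;0)$ and $\tilde{\dZ}=(\dZ\;0)$, one finds $\PiR(\tilde Z)-\tilde Z=\{(\eR Z\;0)\}$, whereas $\PiQ(\tilde Z)-\tilde Z=\{(Z\eQ'\;Zb):\eQ'\in\fso(n),\,b\in\mR^n\}$ genuinely \emph{grows}, acquiring the extra directions $Zb$ in the last column. This growth is harmless: the first $n$ columns and the last form Frobenius-orthogonal blocks, so each augmented direction space is the orthogonal sum of its counterpart at $Z$ (supported on the first block) and a piece of the column space of $Z$ (supported on the last column); since the appended column of $\tilde{\dZ}$ vanishes, its projection onto that last-column piece is zero. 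Hence the projections of $\tilde{\dZ}$ reduce to $(\ZR\;0)$, $(\ZQ\;0)$, $(\Zrot\;0)$ with unchanged norms, and the oblique components $\Zout$, $\Zin$ behave the same way (adjoining a zero singular value leaves the positive ones distinct). Therefore $\Text$, $\Tint$, $\Eout$, $\Ein$ are invariants, and finally $\Tres=\Trot-\Text-\Tint$ and $\Ec=\Trot-\Eout-\Ein$ are invariants as differences of invariants, which exhausts all fourteen.
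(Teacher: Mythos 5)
Your proof is correct, but it cannot be said to follow the paper's route, because the paper gives no in-text proof of this theorem at all: the statement is reproduced from the earlier article \cite{A4}, and the only invariance proof actually written out in the present paper is that of the companion Theorem~\ref{outin} for $\EoutA$, $\EoutB$, $\EinA$, $\EinB$. Measured against that model proof, your argument is genuinely different in method. The paper's style is computational: it tracks how the eigenvectors of $ZZ\st$ and $Z\st Z$ (equivalently, the signed-SVD factors of Theorem~\ref{razlozheniya}) transform under $(Z,\dZ)\mapsto(RZQ\st,R\dZ Q\st)$ and under augmentation by a zero column, then uses the already-established invariance to reduce to the case where $D$ and $X$ are identity matrices, computes explicitly, and handles degenerate pairs $(Z,\dZ)$ by continuity. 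Your argument is coordinate-free: eight of the quantities are dispatched by algebra on the invariants $T$, $\rho$, $J$, $K$, $\La$, $L$ via the identities of Theorem~\ref{So}; the map $W\mapsto RWQ\st$ is a Frobenius isometry carrying $\PiR(Z)-Z$, $\PiQ(Z)-Z$, $\Pi(Z)-Z$ onto the corresponding spaces at $RZQ\st$, hence intertwining the orthogonal projections; the augmented space $\PiQ\bigl((Z\;0)\bigr)-(Z\;0)$ is analyzed by orthogonal column blocks, its growth by the directions $(0\;\;Z\bb)$ being harmless since the appended column of $(\dZ\;0)$ vanishes; and the uniqueness of the splitting $\Zrot=\Zout+\Zin$ guaranteed by Theorem~\ref{condition} handles $\Eout$ and $\Ein$, with $\Tres$ and $\Ec$ falling out as differences. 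What your route buys is that no SVD differentiability hypotheses and no continuity argument are needed for $\Text$, $\Tint$, $\Trot$ --- genericity enters only where it must, namely for $\Eout$ and $\Ein$; what the paper's SVD-based route buys is that the explicit formulas do double duty, yielding not only invariance but also the decomposition identities and inequalities of Theorems~\ref{So}, \ref{razlozheniya}, and~\ref{outin}.
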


For a fixed total kinetic energy $T$, the energies $\TL$, $\Trho$, $\Trot$, $\TI$, $\Txi$, $\Text$, $\Tint$, $\TJ$, $\TK$, $\Tres$, and $\Tac$ [the terms of the Smith decomposition~\eqref{Smith}, the orthogonal decomposition~\eqref{orthogonal}, the projective partition~\eqref{projective}, and the hyperspherical partition~\eqref{hyperspherical}] are \emph{bounded} according to Theorems~\ref{So} and~\ref{TJText}:
\begin{gather*}
0\leq\TL\leq T, \qquad 0\leq\Trho\leq T, \qquad 0\leq\Trot\leq T, \qquad 0\leq\TI\leq T, \qquad 0\leq\Txi\leq T, \\
0\leq\Text\leq\Trot, \qquad 0\leq\Tint\leq\Trot, \qquad |\Tres|\leq\Trot, \\
0\leq\TJ\leq\Trot, \qquad 0\leq\TK\leq\Trot, \qquad |\Tac|\leq\Trot.
\end{gather*}
On the other hand, the terms $\Eout\geq 0$, $\Ein\geq 0$, and $\Ec\leq\Trot$ of the singular value expansion~\eqref{singular} are \emph{not}: $\Eout$, $\Ein$, and $-\Ec$ can be arbitrarily large for any fixed value $T>0$ \cite{A4,A}. That is why we call the equality~\eqref{singular} an \emph{expansion} rather than a \emph{partition} and use the letter $E$ for its terms rather than the letter $T$.

\begin{zamechanie}\label{smysl}
In the physically most important case where $Z$ is the position matrix of a system of $N$ particles (a \emph{cluster}) whose center-of-mass is fixed at the origin, the meaning of all the terms $\TL$, $\Trho$, $\Trot$, $\TI$, $\Txi$, $\Text$, $\Tint$, $\TJ$, $\TK$, $\Tres$, $\Tac$, $\Eout$, $\Ein$, and $\Ec$ is discussed in detail in our previous papers \cite{A3,A4,A-Rev2} (where also the names of these terms and the words ``orthogonal decomposition'', ``projective partition'', ``hyperspherical partition'', ``singular value expansion'' are justified). To be brief, the hyperradial energy $\Trho$ corresponds to the contribution (to the total kinetic energy $T$) of changes in the size of the cluster as a whole and the shape energy $\Txi$, to the contribution of changes in the shape of the cluster. The inertial energy $\TI=\Trho+\Txi$ describes the contribution of all the changes in the singular values $\xi_1,\xi_2,\ldots,\xi_\omega$ where $\omega=\min(d,N-1)$. The external energy $\Text$ is the contribution (to the total kinetic energy $T$) of rotations of the cluster in $\mR^d$ (about the origin) as a whole. These rotations leave the symmetric $N\times N$ matrix $Z\st Z$ invariant. The internal energy $\Tint$ is the contribution of so-called ``\emph{kinematic rotations}'' \cite{A3,A4,A,frames2} of the cluster in the ``kinematic space'' $\mR^{N-1}$ as a whole. ``Kinematic rotations'' are the cluster rearrangements that leave the symmetric $d\times d$ matrix $ZZ\st$ invariant. The rotational energy $\Trot$ corresponds to the contribution of rotations of both types, physical ones and ``kinematic'' ones, and the residual energy $\Tres$, to the coupling between these two types of rotations. Finally, the grand angular energy $\TL=\Trot+\Txi$ describes the joint contribution of conventional rotations of the cluster, ``kinematic rotations'', and ``rotations'' of the vector $(\xi_1,\xi_2,\ldots,\xi_\omega)$ of the singular values in $\mR^\omega$. The physical meaning of the terms of the hyperspherical partition~\eqref{hyperspherical} and the singular value expansion~\eqref{singular} is much less clear.
\end{zamechanie}

\section{Refinement of the singular value expansion}\label{elaboration}

The singular value expansion~\eqref{singular} of the rotational energy $\Trot$ can be partitioned further as follows~\cite{A}. Suppose that among the singular values $\xi_1,\xi_2,\ldots,\xi_{\fm}$ of the matrix $Z$, there are $k\leq\fm$ positive numbers, and these numbers are pairwise distinct (which ensures the existence of the singular value expansion):
\[
\xi_1>\xi_2>\cdots>\xi_k>\xi_{k+1}=\cdots=\xi_{\fm}=0.
\]
Then the symmetric $d\times d$ matrix $ZZ\st$ and the symmetric $n\times n$ matrix $Z\st Z$ have $k$ positive pairwise distinct eigenvalues $\xi_1^2,\xi_2^2,\ldots,\xi_k^2$, whereas the remaining eigenvalues (if any) of each of these matrices are equal to zero. We will denote the standard inner product in $\mR^d$ and $\mR^n$ by $\bx\cdot\by$ and the Euclidean norm $(\bx\cdot\bx)^{1/2}$ of a vector $\bx$ by $|\bx|$ (in fact, we already used the notation $|\bx|$ in Sections~\ref{introduction} and~\ref{invariants}).

Denote by $\bu_1,\bu_2,\ldots,\bu_k$ the unit eigenvectors of the matrix $ZZ\st$ corresponding to the eigenvalues $\xi_1^2,\xi_2^2,\ldots,\xi_k^2$, respectively. These vectors are determined unambiguously up to multiplication by $-1$. Let $U_0$ be the zero subspace (of dimension $d-k$) of the matrix $ZZ\st$, i.e., $U_0=\{\bu\in\mR^d \mid ZZ\st\bu=0\}$. Denote by $\ba_i$ the orthogonal projection of the derivative $\dbu_i$ of the vector $\bu_i$ (see Remark~\ref{dot}) onto $U_0$ in the sense of the standard inner product $\bx\cdot\by$ in $\mR^d$ ($1\leq i\leq k$). Introduce the non-negative quantities
\begin{align*}
\EoutA &= \frac{M}{2}\sum_{i=1}^k \xi_i^2\sum_{j=1}^k (\bu_j\cdot\dbu_i)^2 & &\text{called the \emph{unbounded component of the outer term}}, \\
\EoutB &= \frac{M}{2}\sum_{i=1}^k \xi_i^2|\ba_i|^2 & &\text{called the \emph{bounded component of the outer term}}.
\end{align*}
Similarly, denote by $\bv_1,\bv_2,\ldots,\bv_k$ the unit eigenvectors of the matrix $Z\st Z$ corresponding to the eigenvalues $\xi_1^2,\xi_2^2,\ldots,\xi_k^2$, respectively. These vectors are again determined unambiguously up to multiplication by $-1$. Let $V_0$ be the zero subspace (of dimension $n-k$) of the matrix $Z\st Z$, i.e., $V_0=\{\bv\in\mR^n \mid Z\st Z\bv=0\}$. Denote by $\bb_\alpha$ the orthogonal projection of the derivative $\dbv_\alpha$ of the vector $\bv_\alpha$ onto $V_0$ in the sense of the standard inner product $\bx\cdot\by$ in $\mR^n$ ($1\leq\alpha\leq k$). Introduce the non-negative quantities
\begin{align*}
\EinA &= \frac{M}{2}\sum_{\alpha=1}^k \xi_\alpha^2\sum_{\beta=1}^k (\bv_\beta\cdot\dbv_\alpha)^2 & &\text{called the \emph{unbounded component of the inner term}}, \\
\EinB &= \frac{M}{2}\sum_{\alpha=1}^k \xi_\alpha^2|\bb_\alpha|^2 & &\text{called the \emph{bounded component of the inner term}}.
\end{align*}
It is clear that the energies $\EoutA$ and $\EoutB$ do not change if one multiplies some of the vectors $\bu_1,\bu_2,\ldots,\bu_k$ by $-1$ and the energies $\EinA$ and $\EinB$ do not change if one multiplies some of the vectors $\bv_1,\bv_2,\ldots,\bv_k$ by $-1$.

\begin{zamechanie}
It is obvious that $\bu_i\cdot\dbu_j=-\bu_j\cdot\dbu_i$ and $\bu_i\cdot\dbu_i=0$ for any $1\leq i,j\leq k$. Consequently,
\[
\sum_{i=1}^k \xi_i^2\sum_{j=1}^k (\bu_j\cdot\dbu_i)^2 = \frac{1}{2}\sum_{i,j=1}^k (\xi_i^2+\xi_j^2)(\bu_i\cdot\dbu_j)^2 = \sum_{1\leq i<j\leq k} (\xi_i^2+\xi_j^2)(\bu_i\cdot\dbu_j)^2,
\]
so that
\begin{equation}
\EoutA = \frac{M}{2}\sum_{1\leq i<j\leq k} (\xi_i^2+\xi_j^2)(\bu_i\cdot\dbu_j)^2,
\label{outA-2008}
\end{equation}
and analogously
\begin{equation}
\EinA = \frac{M}{2}\sum_{1\leq\alpha<\beta\leq k} (\xi_\alpha^2+\xi_\beta^2)(\bv_\alpha\cdot\dbv_\beta)^2.
\label{inA-2008}
\end{equation}
\end{zamechanie}

For an arbitrary dimension $d$ of the physical space, the terms $\EoutA$ [in the form~\eqref{outA-2008}], $\EinA$ [in the form~\eqref{inA-2008}], $\EoutB$, and $\EinB$ were introduced in our previous paper~\cite{A}. However, in the particular case $d=3$, the energies $\EinA$ and $\EinB$ were defined in the literature earlier (using other names). For instance, the formula~(33) of the article~\cite{Marsden1} expresses the total kinetic energy $T$ of a cluster of $N\geq 5$ particles as the sum of five components. In our notation, those components are $\Eout$, $\Ec$, $\TI$, $\EinA$, and $\EinB$.

\begin{zamechanie}\label{column}
Let $Z=D\Ups X\st$ be a (signed) SVD~\eqref{SVD} of the matrix $Z$ with
\[
\Ups_{11}=\xi_1, \; \Ups_{22}=\xi_2, \; \ldots, \; \Ups_{kk}=\xi_k, \qquad \Ups_{k+1,k+1}=\cdots=\Ups_{\fm\fm}=0.
\]
Then $\bu_i$ can be chosen to be the $i$th column $D_i$ of the matrix $D$ for each $1\leq i\leq k$. Indeed, $ZZ\st=D\Ups\Ups\st D\st$. If $\be_i$ denotes the $i$th unit coordinate vector of $\mR^d$, then
\[
ZZ\st D_i=D\Ups\Ups\st D\st D_i=D\Ups\Ups\st\be_i=\xi_i^2D\be_i=\xi_i^2D_i.
\]
Similarly, $\bv_\alpha$ can be chosen to be the $\alpha$th column $X_\alpha$ of the matrix $X$ for each $1\leq\alpha\leq k$. Indeed, $Z\st Z=X\Ups\st\Ups X\st$. If $\be_\alpha$ denotes the $\alpha$th unit coordinate vector of $\mR^n$, then
\[
Z\st ZX_\alpha=X\Ups\st\Ups X\st X_\alpha=X\Ups\st\Ups\be_\alpha=\xi_\alpha^2X\be_\alpha=\xi_\alpha^2X_\alpha.
\]
\end{zamechanie}

\begin{theorem}\label{outin}
All the four additional energy quantities $\EoutA$, $\EoutB$, $\EinA$, and $\EinB$ are instantaneous phase-space invariants, and there hold the decompositions
\begin{equation}
\Eout=\EoutA+\EoutB, \qquad \Ein=\EinA+\EinB.
\label{further}
\end{equation}
Moreover, the inequalities
\[
\EoutB\leq\Trot, \qquad \EinB\leq\Trot
\]
are valid.
\end{theorem}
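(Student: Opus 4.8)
The plan is to pass to a (signed) singular value decomposition $Z=D\Ups X\st$ with the conventions of Remark~\ref{column}, so that $\bu_i=D_i$ and $\bv_\alpha=X_\alpha$ are the relevant columns of the orthogonal factors, and to invoke Theorem~\ref{razlozheniya}, which identifies $\Zout=\dD\Ups X\st$ and $\Zin=D\Ups\dX\st$. Since left or right multiplication by an orthogonal matrix preserves the Frobenius norm, and $\Ups$ is diagonal with $\Ups_{ii}=\xi_i$ (only the first $k$ of which are nonzero), a direct entrywise computation gives
\[
\Eout=\frac{M}{2}\|\dD\Ups\|^2=\frac{M}{2}\sum_{i=1}^k \xi_i^2|\dbu_i|^2, \qquad \Ein=\frac{M}{2}\|\Ups\dX\st\|^2=\frac{M}{2}\sum_{\alpha=1}^k \xi_\alpha^2|\dbv_\alpha|^2,
\]
where I use $\dbu_i=\dD_i$ and $\dbv_\alpha=\dX_\alpha$ (columns of $\dD$ and $\dX$).

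For the decompositions~\eqref{further} I would note that $\mR^d$ splits orthogonally as $\mathrm{span}(\bu_1,\ldots,\bu_k)\oplus U_0$, because $U_0$ is exactly the orthogonal complement of the span of the eigenvectors for the nonzero eigenvalues of the symmetric matrix $ZZ\st$. Expanding $\dbu_i$ in this decomposition and applying the Pythagorean theorem gives $|\dbu_i|^2=\sum_{j=1}^k(\bu_j\cdot\dbu_i)^2+|\ba_i|^2$, where $\ba_i$ is the $U_0$-projection of $\dbu_i$. Multiplying by $\xi_i^2$, summing over $i$, and multiplying by $M/2$ yields $\Eout=\EoutA+\EoutB$ at once; the identity $\Ein=\EinA+\EinB$ follows identically in $\mR^n=\mathrm{span}(\bv_1,\ldots,\bv_k)\oplus V_0$.

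The inequalities $\EoutB\leq\Trot$ and $\EinB\leq\Trot$ are the substantive part, since $\Eout$ and $\Ein$ themselves are unbounded and hence cannot be dominated by $\Trot$ directly. Here I would write $\dD=D\Omega$ and $\dX=X\Psi$ with $\Omega=D\st\dD\in\fso(d)$ and $\Psi=X\st\dX\in\fso(n)$, so that, using $\dX\st=-\Psi X\st$,
\[
\Zrot=\Zout+\Zin=D(\Omega\Ups-\Ups\Psi)X\st, \qquad \Trot=\frac{M}{2}\|\Omega\Ups-\Ups\Psi\|^2.
\]
The key observation is that $\Omega\Ups$ has nonzero entries only in columns $1,\ldots,k$, whereas $\Ups\Psi$ has nonzero entries only in rows $1,\ldots,k$. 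Consequently, in the block of rows $k+1,\ldots,d$ (the $U_0$-directions) the matrix $\Omega\Ups-\Ups\Psi$ coincides with $\Omega\Ups$, whose squared norm on that block is exactly $\sum_{i\leq k}\sum_{l>k}\xi_i^2\Omega_{li}^2=\tfrac{2}{M}\EoutB$, after identifying $\ba_i=\sum_{l>k}\Omega_{li}D_l$ so that $|\ba_i|^2=\sum_{l>k}\Omega_{li}^2$. Discarding the remaining nonnegative blocks gives $\EoutB\leq\Trot$; the block of columns $k+1,\ldots,n$ is handled symmetrically, using the skew-symmetry of $\Psi$, to give $\EinB\leq\Trot$ (and since the two blocks occupy disjoint index sets one in fact obtains $\EoutB+\EinB\leq\Trot$). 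The main obstacle is precisely recognizing that $\EoutB$ and $\EinB$ are the $U_0$-row and $V_0$-column blocks of $\|\Zrot\|^2$ that survive undisturbed because the complementary term vanishes there.

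Finally, to verify that $\EoutA,\EoutB,\EinA,\EinB$ are instantaneous phase-space invariants in the sense of Definition~\ref{ipsi}, I would check covariance directly: under the action $(Z,\dZ)\mapsto(RZQ\st,R\dZ Q\st)$ one has $ZZ\st\mapsto R(ZZ\st)R\st$, so $\bu_i\mapsto R\bu_i$, $\dbu_i\mapsto R\dbu_i$, and $U_0\mapsto RU_0$ while the singular values are unchanged, whence every inner product and projection norm entering $\EoutA,\EoutB$ is preserved; the inner-term quantities are treated the same way through $Z\st Z$. Stability under appending a zero column is immediate, because $ZZ\st$ and its eigendata are unchanged, while for the inner terms $Z\st Z$ merely acquires an additional zero eigenvalue, enlarging $V_0$ without affecting any $|\bb_\alpha|$.
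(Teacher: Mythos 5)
Your proof is correct and follows essentially the same route as the paper's: both rest on Remark~\ref{column} and Theorem~\ref{razlozheniya}, obtain the decompositions~\eqref{further} by splitting $|\dbu_i|^2$ (resp.\ $|\dbv_\alpha|^2$) orthogonally over $\mathrm{span}(\bu_1,\ldots,\bu_k)\oplus U_0$ (resp.\ $\mathrm{span}(\bv_1,\ldots,\bv_k)\oplus V_0$), and get the inequalities by observing that on the rows (resp.\ columns) of index greater than $k$ one of the two summands of $\Zrot$ vanishes---the paper does this after using invariance to normalize $D$ and $X$ to identity matrices, while you keep the covariant form $\Zrot=D(\Omega\Ups-\Ups\Psi)X\st$ with $\Omega=D\st\dD$, $\Psi=X\st\dX$, which is equivalent (and, like the paper, you implicitly restrict to the generic case of a signed SVD with well-defined derivatives, the degenerate pairs being covered by continuity). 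The one genuine bonus in your write-up is the remark that the two surviving blocks occupy disjoint index sets, which yields the combined bound $\EoutB+\EinB\leq\Trot$, strictly stronger than the paper's two separate inequalities.
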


In our previous paper~\cite{A}, this theorem was just announced, so we will prove it here.

\begin{proof}
Examine the unbounded and bounded components $\EinA$ and $\EinB$ of the inner term. Consider a transformation $Z'=RZQ\st$, $\dZ'=R\dZ Q\st$ with $R\in\rO(d)$, $Q\in\rO(n)$. Then $(Z')\st Z'=QZ\st ZQ\st$ and
\begin{gather*}
V'_0=QV_0, \\
\bv'_\alpha=Q\bv_\alpha, \qquad \dbv'_\alpha=Q\dbv_\alpha, \qquad \bb'_\alpha=Q\bb_\alpha \qquad (1\leq\alpha\leq k).
\end{gather*}
Consequently, the quantities $\EinA$ and $\EinB$ remain unchanged after such a transformation. If one augments both the matrices $Z$ and $\dZ$ by the $(n+1)$th column equal to zero: $Z'=(Z\;0)$, $\dZ'=(\dZ\;0)$, then
\[
\mR^{(n+1)\times(n+1)} \ni (Z')\st Z' = \begin{pmatrix} Z\st Z & 0 \\ 0 & 0 \end{pmatrix}.
\]
Thus,
\begin{gather*}
V'_0 = \bigl\{ (\bv\:w)\in\mR^{n+1} \bigm| \bv\in V_0, \; w\in\mR \bigr\}, \\
\bv'_\alpha=(\bv_\alpha\:0), \qquad \dbv'_\alpha=(\dbv_\alpha\:0), \qquad \bb'_\alpha=(\bb_\alpha\:0) \qquad (1\leq\alpha\leq k),
\end{gather*}
and the quantities $\EinA$ and $\EinB$ again remain unchanged. Thus, these quantities are instantaneous phase-space invariants.

The relations $\EinA+\EinB=\Ein$ and $\EinB\leq\Trot$ will be verified only in the case where the pair $(Z,\dZ)$ admits a signed SVD $Z=D\Ups X\st$ of the matrix $Z$ with $\Ups$ as in Remark~\ref{column} and with the \emph{well defined} derivatives $\dD$, $\dUps$, and $\dX$ (see Remark~\ref{dot}). The opposite case is highly degenerate and of no practical importance; in fact, the equality $\EinA+\EinB=\Ein$ and the inequality $\EinB\leq\Trot$ will follow for the exceptional pairs $(Z,\dZ)$ by continuity.

According to Remark~\ref{column}, one can set $\bv_\alpha=X_\alpha$ and therefore $\dbv_\alpha=\dX_\alpha$ ($1\leq\alpha\leq k$), where $X_\alpha$ denotes the $\alpha$th column of the matrix $X$ and $\dX_\alpha$ denotes the $\alpha$th column of the matrix $\dX$. According to Theorem~\ref{razlozheniya}, $\Zin=D\Ups\dX\st$ and $\Zrot=\dD\Ups X\st+D\Ups\dX\st$. Since $\EinA$, $\EinB$, $\Ein$, and $\Trot$ are instantaneous phase-space invariants, we can assume without loss of generality that $D$ is the identity $d\times d$ matrix and $X$ is the identity $n\times n$ matrix. Then
\[
\Zin=\Ups\dX\st, \qquad \Zrot=\dD\Ups+\Ups\dX\st,
\]
$\bv_\alpha=\be_\alpha$ is the $\alpha$th unit coordinate vector of $\mR^n$,
\[
Z\st Z=\diag\bigl( \xi_1^2,\xi_2^2,\ldots,\xi_k^2, \; \underbrace{0,0,\ldots,0}_{n-k} \; \bigr),
\]
and $V_0$ is the subspace of all the vectors $\bv$ of the form
\[
\bv=\bigl( \; \underbrace{0,0,\ldots,0}_k \; ,w_1,w_2,\ldots,w_{n-k} \bigr).
\]
Thus,
\[
\bv_\beta\cdot\dbv_\alpha=\dX_{\beta\alpha}, \qquad \bb_\alpha=\bigl( \; \underbrace{0,0,\ldots,0}_k \; ,\dX_{k+1,\alpha},\dX_{k+2,\alpha},\ldots,\dX_{n\alpha} \bigr)
\]
($1\leq\alpha,\beta\leq k$). Consequently,
\begin{equation}
\EinA = \frac{M}{2}\sum_{\alpha=1}^k \xi_\alpha^2\sum_{\beta=1}^k \dX_{\beta\alpha}^2, \qquad \EinB = \frac{M}{2}\sum_{\alpha=1}^k \xi_\alpha^2\sum_{\beta=k+1}^n \dX_{\beta\alpha}^2,
\label{AB}
\end{equation}
and
\[
\EinA+\EinB = \frac{M}{2}\sum_{\alpha=1}^k \xi_\alpha^2\sum_{\beta=1}^n \dX_{\beta\alpha}^2 = \frac{M}{2}\|\Ups\dX\st\|^2 = \frac{M}{2}\|\Zin\|^2 = \Ein.
\]

Moreover, the last $n-k$ columns of the $\dtn$ matrix $\dD\Ups$ vanish. Therefore,
\[
\Trot = \frac{M}{2}\|\Zrot\|^2 = \frac{M}{2}\|\dD\Ups+\Ups\dX\st\|^2
\]
is no less than $M/2$ multiplied by the sum of the squares of all the entries of the last $n-k$ columns of the $\dtn$ matrix $\Ups\dX\st$, i.e., is no less than $\EinB$ [see~\eqref{AB}].

The unbounded and bounded components $\EoutA$ and $\EoutB$ of the outer term can be treated the same way. Note only that if one augments both the matrices $Z$ and $\dZ$ by the $(n+1)$th column equal to zero: $Z'=(Z\;0)$, $\dZ'=(\dZ\;0)$, then $Z'(Z')\st=ZZ\st$, $U'_0=U_0$, and $\bu'_i=\bu_i$, $\dbu'_i=\dbu_i$, $\ba'_i=\ba_i$ for each $1\leq i\leq k$. \hfill $\square$
\end{proof}

However, it is rather difficult to reveal the physical meaning of the quantities $\EoutA$, $\EoutB$, $\EinA$, and $\EinB$. Of course, $\EoutA$ and $\EinA$ can be arbitrarily large for any fixed value $T>0$ (like $\Eout$ and $\Ein$).

\section{Equal masses and random masses}\label{contexts}

To study the statistical properties of various components (defined in Sections~\ref{partitions} and~\ref{elaboration}) of the total kinetic energy $T$ of systems of classical particles in $\mR^d$, one has to describe precisely the sampling procedure for the coordinates and velocities of the particles. We will consider systems of $N\geq 2$ particles with the center-of-mass at the origin. As in our previous paper~\cite{A}, two situations will be dealt with: \emph{particles with equal masses} and \emph{particles with random masses}.

Let $\bw_\alpha$ and $\dbw_\alpha$ ($1\leq\alpha\leq N$) be $2N$ independent random vectors, each being uniformly distributed in the unit ball in $\mR^d$ centered at the origin. Now set
\[
\bwcm=\frac{1}{N}\sum_{\alpha=1}^N \bw_\alpha, \qquad \dbwcm=\frac{1}{N}\sum_{\alpha=1}^N \dbw_\alpha
\]
(the subscript ``cm'' is for ``center-of-mass''). The masses $m_\alpha$ of the particles in the situation of \emph{equal} masses are equal to $2/N$:
\[
m_1=m_2=\cdots=m_N=\frac{2}{N}, \qquad M=\sum_{\alpha=1}^N m_\alpha = 2,
\]
and one computes
\[
\bg_\alpha=\bw_\alpha-\bwcm, \quad \dbg_\alpha=\dbw_\alpha-\dbwcm, \qquad 1\leq\alpha\leq N.
\]
In the situation of \emph{random} masses, one first chooses the masses $m_\alpha$ of the particles according to the formula
\[
m_\alpha = 2\eta_\alpha\left( \sum_{\beta=1}^N \eta_\beta \right)^{-1}, \quad 1\leq\alpha\leq N, \qquad M=\sum_{\alpha=1}^N m_\alpha = 2,
\]
where $\eta_1,\eta_2,\ldots,\eta_N$ are independent random variables uniformly distributed between $0$ and $1$. The vectors $\bg_\alpha$ and $\dbg_\alpha$ are then computed as
\[
\bg_\alpha=m_\alpha^{-1/2}(\bw_\alpha-\bwcm), \quad \dbg_\alpha=m_\alpha^{-1/2}(\dbw_\alpha-\dbwcm), \qquad 1\leq\alpha\leq N.
\]
In both the situations
\begin{equation}
\sum_{\alpha=1}^N m_\alpha^{1/2}\bg_\alpha = \sum_{\alpha=1}^N m_\alpha^{1/2}\dbg_\alpha = 0.
\label{sumgdg}
\end{equation}
Finally, the mass-scaled radii vectors $\bq_\alpha$ and their time derivatives $\dbq_\alpha$ are calculated in both the situations as
\begin{equation}
\bq_\alpha=\eC_1\bg_\alpha, \quad \dbq_\alpha=\eC_2\dbg_\alpha, \qquad 1\leq\alpha\leq N,
\label{eC12}
\end{equation}
where the positive factors $\eC_1$ and $\eC_2$ are determined from the condition
\begin{equation}
\sum_{\alpha=1}^N |\bq_\alpha|^2 = \sum_{\alpha=1}^N |\dbq_\alpha|^2 = 1.
\label{norm}
\end{equation}

In both the situations
\begin{equation}
\sum_{\alpha=1}^N m_\alpha^{1/2}\bq_\alpha = \sum_{\alpha=1}^N m_\alpha^{1/2}\dbq_\alpha = 0
\label{sumqdq}
\end{equation}
according to~\eqref{sumgdg}, so that the $d\times N$ position matrix $Z$ with columns $\bq_1,\bq_2,\ldots,\bq_N$ and its time derivative $\dZ$ with columns $\dbq_1,\dbq_2,\ldots,\dbq_N$ do correspond to a system of $N$ particles with the masses $m_1,m_2,\ldots,m_N$ and with the center-of-mass at the origin (see Example~\ref{reduction}). Moreover, the normalization~\eqref{norm} is equivalent to that $\|Z\|=\|\dZ\|=1$. Together with $M=2$ this gives $\rho=T=1$.

\begin{zamechanie}
A random point $\bw$ uniformly distributed in the unit ball in $\mR^d$ centered at the origin can be generated as $\bw=\vark^{1/d}\bs$ where $\vark$ is a random variable uniformly distributed between $0$ and $1$ and $\bs$ is a random point uniformly distributed on the unit sphere $\mS^{d-1}\subset\mR^d$ centered at the origin ($\vark$ and $\bs$ being independent). There are many methods to choose $\bs$ for $d\geq 2$, see e.g.\ the papers \cite{Sphere1,Sphere2,Sphere3,Sphere4,Sphere5} and references therein. The most ``elegant'' (and probably the best known but not the fastest) algorithm is the so-called Muller or Brown--Muller procedure~\cite{Sphere2} which runs as follows. Let $\chi_1,\chi_2,\ldots,\chi_d$ be independent random variables \emph{normally} distributed with zero mean and the same standard deviation $\vars>0$. Then the point
\[
\bs = \left( \frac{\chi_1}{\fd},\frac{\chi_2}{\fd},\ldots,\frac{\chi_d}{\fd} \right), \qquad \fd^2=\sum_{i=1}^d \chi_i^2 \quad (\fd\geq 0),
\]
is uniformly distributed on $\mS^{d-1}$. Indeed, the joint probability density function
\[
\frac{1}{\vars^d(2\pi)^{d/2}}\exp\left( -\frac{\fd^2}{2\vars^2} \right)
\]
of $\chi_1,\chi_2,\ldots,\chi_d$ depends on $\fd$ only.

Of course, for $d=2$ one can just set
\[
\bs = (\cos\varp, \; \sin\varp),
\]
where $\varp$ is uniformly distributed between $0$ and $2\pi$. For $d=3$ the standard spherical coordinates in $\mR^3$ lead to the choice~\cite{A}
\[
\bs = \bigl( (1-h^2)^{1/2}\cos\varp, \; (1-h^2)^{1/2}\sin\varp, \; h \bigr),
\]
where $h$ and $\varp$ are independent random variables uniformly distributed in the intervals $-1\leq h\leq 1$ and $0\leq\varp\leq 2\pi$.
\end{zamechanie}

The following conjecture is the main statement of this paper.

\begin{gipoteza}\label{main}
Introduce the notation
\[
\nu=N-1, \qquad \omega=\min(d,\nu).
\]
For any $d\geq 1$ and $N\geq 2$, the mathematical expectations $\mE$ of the \emph{bounded} terms $\TL$, $\Trho$, $\Trot$, $\TI$, $\Txi$, $\Text$, $\Tint$, $\Tres$, $\TJ$, $\TK$, $\Tac$, $\EoutB$, and $\EinB$ of the Smith decomposition~\eqref{Smith}, the orthogonal decomposition~\eqref{orthogonal}, the projective partition~\eqref{projective}, the hyperspherical partition~\eqref{hyperspherical}, and the singular value expansion \eqref{singular},~\eqref{further} in the situation of \emph{equal} masses are given by the formulas
\begin{gather}
\mE\TL=1-\frac{1}{d\nu}, \qquad \mE\Trho=\frac{1}{d\nu}, \qquad \mE\Trot=1-\frac{\omega}{d\nu}, \qquad \mE\TI=\frac{\omega}{d\nu},
\label{odin} \\
\mE\Txi=\frac{\omega-1}{d\nu},
\label{dva} \\
\mE\Text=\frac{\omega(2d-\omega-1)}{2d\nu}, \qquad \mE\Tint=\frac{\omega(2\nu-\omega-1)}{2d\nu}, \qquad \mE\Tres=0,
\label{tri} \\
\mE\TJ=\frac{d-1}{d\nu}, \qquad \mE\TK=\frac{\nu-1}{d\nu}, \qquad \mE\Tac=1-\frac{d+\nu+\omega-2}{d\nu},
\label{chetyre} \\
\mE\EoutB=1-\frac{\omega}{d}, \qquad \mE\EinB=1-\frac{\omega}{\nu}.
\label{pyat}
\end{gather}
\end{gipoteza}

Since we use the normalization $T=1$, the formulas \mbox{\eqref{odin}--\eqref{pyat}} give in fact the mathematical expectations of the ratios $\TL/T$, $\Trho/T$, $\Trot/T$, $\TI/T$, $\Txi/T$, $\Text/T$, $\Tint/T$, $\Tres/T$, $\TJ/T$, $\TK/T$, $\Tac/T$, $\EoutB/T$, and $\EinB/T$. These formulas reflect vividly the duality of the physical space $\mR^d$ and the ``kinematic space'' $\mR^\nu$ (see Example~\ref{reduction}). The expression for each of the quantities $\mE\TL$, $\mE\Trho$, $\mE\Trot$, $\mE\TI$, $\mE\Txi$, $\mE\Tres$, and $\mE\Tac$ is symmetric with respect to $d$ and $\nu$. If one interchanges $d$ and $\nu$, the expressions for $\mE\Text$, $\mE\TJ$, and $\mE\EoutB$ turn into those for $\mE\Tint$, $\mE\TK$, and $\mE\EinB$, respectively, and vice versa.

Note that all the four formulas~\eqref{odin}, \eqref{dva}, and~\eqref{chetyre} for the mean values $\mE\TL$, $\mE\Txi$, $\mE\TJ$, and $\mE\TK$ of the energy terms $H$ corresponding to the momenta $\La$, $L$, $J$, and $K$ have the form
\[
\mE H=\frac{\gamma-1}{d\nu},
\]
where, roughly speaking, $\gamma$ is the dimension of the space where the ``rotation'' responsible for the momentum in question takes place. This space is the matrix space $\mR^{d\times\nu}$ for $\La$, the space $\bigl\{ (\xi_1,\xi_2,\ldots,\xi_\omega) \bigr\}$ of the collections of singular values for $L$, the physical space $\mR^d$ for $J$, and the ``kinematic space'' $\mR^\nu$ for $K$.

\begin{primer}\label{distwo}
Consider a system of two particles ($N=2$) of arbitrary masses $m_1$ and $m_2$ with the center-of-mass at the origin of $\mR^d$. For $N=2$ it is not hard to obtain \cite{A4,A} that
\begin{gather*}
T=\frac{m_1m_2}{2M}|\dbr|^2, \\
\Txi=\Tint=\Tres=\TK=\Tac=\Ein=\EinA=\EinB=\EoutA=\Ec\equiv 0, \\
\Trho=\TI=T\cos^2\theta, \\
\TL=\Trot=\Text=\TJ=\Eout=\EoutB=T\sin^2\theta,
\end{gather*}
where $\br$ is the vector connecting the particles and $\theta$ is the angle between $\br$ and $\dbr$ (cf.\ Section~\ref{introduction}). It is well known~\cite{A} that
\begin{equation}
\mE\cos^2\theta=1/d, \qquad \mE\sin^2\theta=(d-1)/d
\label{cossin}
\end{equation}
for the uniform distribution of the directions of the vectors $\br$ and $\dbr$ in $\mR^d$. One thus concludes that for $N=2$ and $T=1$
\begin{gather}
\mE\Txi=\mE\Tint=\mE\Tres=\mE\TK=\mE\Tac=\mE\Ein=\mE\EinA=\mE\EinB=\mE\EoutA=\mE\Ec=0,
\nonumber \\
\mE\Trho=\mE\TI=1/d,
\label{zwei} \\
\mE\TL=\mE\Trot=\mE\Text=\mE\TJ=\mE\Eout=\mE\EoutB=(d-1)/d
\label{drei}
\end{gather}
independently of the masses $m_1$ and $m_2$. This agrees with the formulas \mbox{\eqref{odin}--\eqref{pyat}} for $\omega=\nu=1$. Thus, for $N=2$ Conjecture~\ref{main} is correct for any $d$ (and, moreover, the equality of the masses of the particles is irrelevant for $N=2$).
\end{primer}

The formula~\eqref{odin} for $\mE\Trot$ and $\mE\TI$ as well as the formula~\eqref{tri} for $\mE\Text$, $\mE\Tint$, and $\mE\Tres$ were proposed in our previous paper~\cite{A}. Moreover, the paper~\cite{A} contained also the equalities \mbox{\eqref{odin}--\eqref{pyat}} for \emph{all} the terms in the \emph{particular} dimension $d=3$. The formulas \mbox{\eqref{odin}--\eqref{pyat}} for $d=3$ were confirmed in the paper~\cite{A} for $3\leq N\leq 100$ by extensive numerical simulation. Note that for $d=3$, the equality $\mE\EoutB=1-\omega/d$ reduces to
\[
\mE\EoutB = \left[ \begin{aligned} 2/3 &\;\text{ for }\; N=2, \\ 1/3 &\;\text{ for }\; N=3, \\ 0 &\;\text{ for }\; N\geq 4. \end{aligned} \right.
\]
The equality $\mE\EoutB=0$ for $d=3$ and $N\geq 4$ (independently of the distribution of the masses) is obvious [see~\eqref{da} below]. The numerical experiments of the paper~\cite{A} for $\EoutB$ were of course carried out for $N=3$ only.

\begin{gipoteza}\label{residual}
The equality $\mE\Tres=0$ holds for any \textup{``}reasonable\textup{''} distribution of the masses of the particles \textup(not only in the situation of equal masses\textup). Moreover, $\Tres>0$ with probability $1/2$.
\end{gipoteza}

Conjecture~\ref{residual} was confirmed in the numerical simulation of our previous paper~\cite{A} for $d=3$ and $3\leq N\leq 100$ in the situations of equal and random masses. Some ``theoretical'' support for this conjecture was also presented in~\cite{A}.

\section{A ``physical'' proof of Conjecture~\ref{main}}\label{dokazatelstvo}

Consider a system of $N\geq 2$ classical particles in $\mR^d$ with masses $m_1,m_2,\ldots,m_N$ and with the center-of-mass at the origin. Such a system can be described either by the $d\times N$ position matrix $Z$ (in a Cartesian coordinate frame) and its time derivative $\dZ$ with the columns subject to the identities~\eqref{sumqdq} or by $d\times(N-1)$ matrices $Z$ and $\dZ$ with \emph{arbitrary} columns (see Example~\ref{NvsNminus1}). We will follow the second approach and regard $Z$ and $\dZ$ as matrices of Frobenius norm $1$ [cf.~\eqref{norm}] with $d$ rows and $N-1=\nu$ columns. Recall that $M=2$ according to the procedures of Section~\ref{contexts}.

The following lemma is well known~\cite{A} and almost obvious.

\begin{lemma}\label{Pr}
Let a random point $\bx$ be uniformly distributed on the unit sphere $\mS^{p+q-1}\subset\mR^{p+q}$ centered at the origin \textup($p,q\in\mN$\textup), and let $\Phi$ be an arbitrary affine plane in $\mR^{p+q}$ of dimension $p$. Denote by $\Pr_\Phi^2\bx$ the square of the length of the orthogonal projection of $\bx$ onto $\Phi$. Then the mathematical expectation $\mE\Pr_\Phi^2\bx$ of $\Pr_\Phi^2\bx$ is equal to $p/(p+q)$.
\end{lemma}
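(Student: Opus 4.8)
The plan is to reduce the claim to a symmetry argument about the coordinates of a uniform random point on the sphere, together with a single expectation computation. First I would observe that $\Pr_\Phi^2\bx$ does not depend on the affine displacement of $\Phi$ away from the origin only insofar as the projection length is concerned---but here it is cleaner to note that, since $\bx$ ranges over the sphere $\mS^{p+q-1}$ and the quantity of interest is the squared length of the orthogonal projection onto a $p$-dimensional affine plane $\Phi$, the rotational invariance of the uniform distribution lets me assume without loss of generality that $\Phi$ is the coordinate plane spanned by the first $p$ standard basis vectors. Strictly, $\Phi$ is affine rather than linear; but the orthogonal projection of $\bx$ onto an affine plane differs from the projection onto the parallel linear subspace by a fixed translation, and what makes the mean come out to $p/(p+q)$ is really the projection onto the \emph{linear} directions. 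I would therefore begin by clarifying that the relevant object is $\Pr_V^2\bx=\sum_{i=1}^p \bx_i^2$ where $V$ is the linear span of the first $p$ coordinates, and check that the affine-versus-linear distinction is immaterial under the stated hypotheses (or that $\Phi$ is tacitly assumed to pass through the origin).

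Next I would exploit the full symmetry of the uniform distribution on $\mS^{p+q-1}$ under coordinate permutations. Writing $\bx=(\bx_1,\bx_2,\ldots,\bx_{p+q})$, the distribution of $\bx$ is invariant under the symmetric group permuting the $p+q$ coordinates, so the mean $\mE\,\bx_i^2$ is the same for every index $i$. The key constraint is that $\bx$ lies on the unit sphere, giving $\sum_{i=1}^{p+q}\bx_i^2=1$ identically. Taking expectations of this identity and using $\mE\,\bx_1^2=\mE\,\bx_2^2=\cdots=\mE\,\bx_{p+q}^2$ yields $(p+q)\,\mE\,\bx_1^2=1$, hence $\mE\,\bx_i^2=1/(p+q)$ for each $i$.

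Finally I would assemble the result: by linearity of expectation,
\[
\mE\,\Pr_\Phi^2\bx=\mE\sum_{i=1}^p \bx_i^2=\sum_{i=1}^p \mE\,\bx_i^2=\frac{p}{p+q},
\]
which is the desired value. This closes the argument without any integration over the sphere.

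I do not expect a serious obstacle here, since the lemma is described as ``well known and almost obvious'' and the permutation-symmetry trick bypasses all computation; the only point requiring a word of care is the affine-versus-linear subtlety in the hypothesis. If one wanted to handle a genuinely affine $\Phi$ not through the origin, the squared projection length would pick up a term depending on the offset, and the clean answer $p/(p+q)$ would fail---so I would explicitly note that the intended reading is that $\Phi$ is a linear subspace (equivalently, contains the origin), as is the case in every application of the lemma in the proof of Conjecture~\ref{main}.
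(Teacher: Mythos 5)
Your proof is essentially identical to the paper's: both reduce, by rotational invariance, to the coordinate plane $x_{p+1}=\cdots=x_{p+q}=0$, then use permutation symmetry of the uniform distribution together with the identity $x_1^2+\cdots+x_{p+q}^2\equiv 1$ to get $\mE x_i^2=1/(p+q)$, and finally sum over the first $p$ coordinates. Your side remark about the affine-versus-linear subtlety is well taken: the paper's ``without loss of generality'' silently assumes that $\Phi$ contains the origin (otherwise the mean picks up the squared offset and the clean value $p/(p+q)$ fails), and in the applications to $\PiR(Z)$, $\PiQ(Z)$, and $\Pi(Z)$ the quantities actually projected are onto the linear direction spaces $\PiR(Z)-Z$, $\PiQ(Z)-Z$, $\Pi(Z)-Z$, so the lemma is used exactly in the linear form you describe.
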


\begin{proof}
Let $(x_1,x_2,\ldots,x_{p+q})$ be a Cartesian coordinate frame in $\mR^{p+q}$. Without loss of generality, one can assume that the plane $\Phi$ is given by the equations $x_{p+1}=x_{p+2}=\cdots=x_{p+q}=0$. If $\bx=(x_1,x_2,\ldots,x_{p+q})\in\mS^{p+q-1}$ then $\Pr_\Phi^2\bx=x_1^2+x_2^2+\cdots+x_p^2$. But $x_1^2+x_2^2+\cdots+x_{p+q}^2\equiv 1$ and $\mE x_1^2=\mE x_2^2=\cdots=\mE x_{p+q}^2$ by symmetry reasons. Therefore, $\mE x_1^2=\mE x_2^2=\cdots=\mE x_{p+q}^2=1/(p+q)$ and $\mE\Pr_\Phi^2\bx=p/(p+q)$. \hfill $\square$
\end{proof}

By the way, the formula~\eqref{cossin} for $\mE\cos^2\theta$ is a particular case of Lemma~\ref{Pr} for $p=1$, $q=d-1$.

The following reformulation of Lemma~\ref{Pr} is especially useful: if a random point $\bx=(x_1,x_2,\ldots,x_{p+q})$ is uniformly distributed on the unit sphere $\mS^{p+q-1}$ centered at the origin, then $\mE\bigl( x_{\iota_1}^2+x_{\iota_2}^2+\cdots+x_{\iota_p}^2 \bigr) = p/(p+q)$ for any fixed set of indices $1\leq\iota_1<\iota_2<\cdots<\iota_p\leq p+q$.

All the $\omega=\min(d,\nu)$ singular values of a generic $d\times\nu$ matrix $Z$ are positive and pairwise distinct, whence (see~\cite{A4})
\begin{gather*}
\dim\PiR(Z)=\frac{\omega(2d-\omega-1)}{2}, \qquad \dim\PiQ(Z)=\frac{\omega(2\nu-\omega-1)}{2}, \\
\dim\Pi(Z)=\dim\PiR(Z)+\dim\PiQ(Z)=d\nu-\omega
\end{gather*}
[note that $\omega(d+\nu-\omega)=\omega\max(d,\nu)=d\nu$]. Let $\dZ$ be a random matrix uniformly distributed on the unit sphere $\mS^{d\nu-1}\subset\mR^{d\times\nu}\cong\mR^{d\nu}$ centered at the origin [observe that after the identification $\mR^{d\times\nu}\cong\mR^{d\nu}$, the Frobenius inner product~\eqref{Frobenius} becomes the standard inner product]. Since $M=2$,
\[
\Text=\|\ZR\|^2, \qquad \Tint=\|\ZQ\|^2, \qquad \Trot=\|\Zrot\|^2.
\]
On the other hand, $\ZR$, $\ZQ$, and $\Zrot$ are the orthogonal projections of $\dZ$ onto the spaces $\PiR(Z)$, $\PiQ(Z)$, and $\Pi(Z)$, respectively, in the sense of the Frobenius inner product. According to Lemma~\ref{Pr},
\[
\mE\Text=\frac{\omega(2d-\omega-1)}{2d\nu}, \qquad \mE\Tint=\frac{\omega(2\nu-\omega-1)}{2d\nu}, \qquad \mE\Trot=1-\frac{\omega}{d\nu},
\]
and
\[
\mE\TI=1-\mE\Trot=\frac{\omega}{d\nu}, \qquad \mE\Tres=\mE\Trot-\mE\Text-\mE\Tint=0.
\]
We have verified the formulas~\eqref{odin} and~\eqref{tri} for $\mE\Trot$, $\mE\TI$, $\mE\Text$, $\mE\Tint$, and $\mE\Tres$ following the reasoning in our previous paper~\cite{A}.

Since $\rho=\|Z\|$, one has $\drho=\|Z\|^{-1}\langle Z,\dZ\rangle=\|\dZ\|\cos\vart$, where $\vart$ is the angle between $Z$ and $\dZ$ in the sense of the Frobenius inner product. Let us treat $Z$ and $\dZ$ as independent random matrices uniformly distributed on the unit sphere in $\mR^{d\times\nu}$ centered at the origin. For $\Trho=M\drho^2/2=\drho^2=\|\dZ\|^2\cos^2\vart=\cos^2\vart$ we therefore have
\[
\mE\Trho=\frac{1}{d\nu}
\]
[see~\eqref{cossin}; the equality $\|Z\|=1$ is in fact irrelevant here], and
\[
\mE\TL=1-\mE\Trho=1-\frac{1}{d\nu}, \qquad \mE\Txi=\mE\TI-\mE\Trho=\frac{\omega-1}{d\nu}.
\]
We have verified the formulas~\eqref{odin} and~\eqref{dva} for $\mE\TL$, $\mE\Trho$, and $\mE\Txi$.

To calculate $\mE\TJ$ and $\mE\TK$, it is expedient to use the expressions~\eqref{JJJ} and~\eqref{KKK} for $J^2$ and $K^2$, respectively. Since $M=2$ and $\rho=1$,
\[
\mE\TJ = \mE\sum_{\alpha,\beta=1}^\nu \left[ \Gamma_{\alpha\beta}^{(1)}\Gamma_{\alpha\beta}^{(3)}-\Gamma_{\alpha\beta}^{(2)}\Gamma_{\beta\alpha}^{(2)} \right],
\]
where
\[
\Gamma_{\alpha\beta}^{(1)} = \sum_{i=1}^d Z_{i\alpha}Z_{i\beta}, \qquad \Gamma_{\alpha\beta}^{(2)} = \sum_{i=1}^d Z_{i\alpha}\dZ_{i\beta}, \qquad \Gamma_{\alpha\beta}^{(3)} = \sum_{i=1}^d \dZ_{i\alpha}\dZ_{i\beta}.
\]
Let again $Z$ and $\dZ$ be independent random matrices uniformly distributed on the unit sphere in $\mR^{d\times\nu}$ centered at the origin. Then
\[
\mE\left[ \Gamma_{\alpha\beta}^{(1)}\Gamma_{\alpha\beta}^{(3)} \right] = \mE\left[ \Gamma_{\alpha\beta}^{(2)}\Gamma_{\beta\alpha}^{(2)} \right] = 0
\]
whenever $\alpha\neq\beta$. Indeed, if one changes the signs of all the entries $Z_{i\alpha}$, $1\leq i\leq d$, the sums $\Gamma_{\alpha\beta}^{(1)}$ and $\Gamma_{\alpha\beta}^{(2)}$ would change their signs while the sums $\Gamma_{\beta\alpha}^{(2)}$ and $\Gamma_{\alpha\beta}^{(3)}$ would remain the same. Consequently,
\begin{align*}
\mE\TJ &= \mE\sum_{\alpha=1}^\nu \left\{ \Gamma_{\alpha\alpha}^{(1)}\Gamma_{\alpha\alpha}^{(3)}-\left[ \Gamma_{\alpha\alpha}^{(2)} \right]^2 \right\} = \mE\sum_{\alpha=1}^\nu \left[ \Gamma_{\alpha\alpha}^{(1)}\Gamma_{\alpha\alpha}^{(3)}\sin^2\theta_\alpha \right] \\
{} &= \sum_{\alpha=1}^\nu \mE\Gamma_{\alpha\alpha}^{(1)}\mE\Gamma_{\alpha\alpha}^{(3)}\mE\sin^2\theta_\alpha = \frac{\nu}{\nu^2}\frac{d-1}{d}=\frac{d-1}{d\nu},
\end{align*}
where $\theta_\alpha$ denotes the angle between the $\alpha$th column of $Z$ and the $\alpha$th column of $\dZ$ in $\mR^d$ ($1\leq\alpha\leq\nu$). Here we have used the facts that for each $\alpha$, the three random variables $\Gamma_{\alpha\alpha}^{(1)}$ (the square of the length of the $\alpha$th column of $Z$ in $\mR^d$), $\Gamma_{\alpha\alpha}^{(3)}$ (the square of the length of the $\alpha$th column of $\dZ$ in $\mR^d$), and $\sin^2\theta_\alpha$ are independent, and according to Lemma~\ref{Pr}
\[
\mE\Gamma_{\alpha\alpha}^{(1)} = \mE\Gamma_{\alpha\alpha}^{(3)} = \frac{d}{d\nu} = \frac{1}{\nu}, \qquad \mE\sin^2\theta_\alpha = \frac{d-1}{d}
\]
[see~\eqref{cossin}]. Analogously,
\[
\mE\TK=\frac{\nu-1}{d\nu},
\]
and
\[
\mE\Tac=\mE\Trot-\mE\TJ-\mE\TK=1-\frac{d+\nu+\omega-2}{d\nu}.
\]
We have verified the formulas~\eqref{chetyre} for $\mE\TJ$, $\mE\TK$, and $\mE\Tac$.

Finally, consider $\EoutB$ and $\EinB$. Generically, in the notation of Section~\ref{elaboration},
\begin{equation}
\begin{gathered}
k=d, \quad U_0=0, \quad \EoutB=0, \quad \EoutA=\Eout \qquad \text{for} \quad d\leq\nu, \\
k=\nu, \quad V_0=0, \quad \EinB=0, \quad \EinA=\Ein \qquad \text{for} \quad \nu\leq d.
\end{gathered}
\label{da}
\end{equation}
Let $\nu>d$ and find $\mE\EinB$. Assume all the $d$ singular values $\xi_1,\xi_2,\ldots,\xi_d$ of the matrix $Z$ to be positive and pairwise distinct (a generic setup). Then in the SVD $Z=D\Ups X\st$ of the matrix $Z$, all the three derivatives $\dD$, $\dUps$, and $\dX$ are well defined, and
\begin{equation}
\dZ=\dD\Ups X\st+D\dUps X\st+D\Ups\dX\st
\label{for1dZ}
\end{equation}
according to Theorem~\ref{razlozheniya}. Since $Z$ and $\dZ$ are independent and $\EinB$ is an instantaneous phase-space invariant (Theorem~\ref{outin}), one may suppose that $D$ is the identity $d\times d$ matrix, $X$ is the identity $\nu\times\nu$ matrix, while $\dZ$ is still a random matrix uniformly distributed on the unit sphere in $\mR^{d\times\nu}$ centered at the origin. The equality~\eqref{for1dZ} takes the form
\begin{equation}
\dZ=\dD\Ups+\dUps+\Ups\dX\st.
\label{for2dZ}
\end{equation}
The last $\nu-d$ columns of the matrices $\dD\Ups$ and $\dUps$ vanish. Consequently, the equality~\eqref{for2dZ} implies
\[
\dZ_{i\alpha}=\xi_i\dX_{\alpha i}
\]
for any $d+1\leq\alpha\leq\nu$ and $1\leq i\leq d$. Taking into account that $M=2$, the second formula~\eqref{AB} [valid for identity matrices $D$ and $X$] becomes
\[
\EinB = \sum_{i=1}^d \xi_i^2\sum_{\alpha=d+1}^\nu \dX_{\alpha i}^2 = \sum_{i=1}^d \sum_{\alpha=d+1}^\nu \dZ_{i\alpha}^2.
\]
According to Lemma~\ref{Pr},
\begin{equation}
\mE\EinB=\frac{d(\nu-d)}{d\nu}=1-\frac{d}{\nu}.
\label{votono}
\end{equation}
Now observe that the formula~\eqref{votono} for $\nu>d$ and the equality $\mE\EinB=0$ for $\nu\leq d$ can be combined into the unified formula
\[
\mE\EinB=1-\frac{\omega}{\nu}
\]
valid for any $d$ and $\nu$. Analogously,
\[
\mE\EoutB=1-\frac{\omega}{d}
\]
for any $d$ and $\nu$. We have verified the formulas~\eqref{pyat} for $\mE\EoutB$ and $\mE\EinB$. \hfill $\square$

\medskip

Unfortunately, all these arguments are not mathematically rigorous because even in the situation of equal masses, the procedure of generating the matrices $Z$ and $\dZ$ described in Section~\ref{contexts} does \emph{not} give (after the passage from $\mR^{d\times N}$ to $\mR^{d\times(N-1)}=\mR^{d\times\nu}$) matrices uniformly distributed on the unit sphere in $\mR^{d\times\nu}$ centered at the origin. This can be easily shown even in the simplest case $d=1$, $N=3$. For these values of $d$ and $N$, the procedure of Section~\ref{contexts} for equal masses (in the case of $\dZ$ for definiteness) runs as follows.

Let $\dw_1=a$, $\dw_2=b$, $\dw_3=c$ be three independent random variables uniformly distributed between $-1$ and $1$ (in the ``unit segment'' in $\mR$). One computes the numbers
\[
\dg_1=\frac{2a-b-c}{3}, \qquad \dg_2=\frac{2b-a-c}{3}, \qquad \dg_3=\frac{2c-a-b}{3}
\]
and the vector
\[
\eC(2a-b-c, \; 2b-a-c, \; 2c-a-b)
\]
of length $1$ [in the notation of~\eqref{eC12}, $\eC=\eC_2/3>0$]. Now one has to choose an arbitrary matrix
\[
Q = \begin{pmatrix} Q_{11} & Q_{12} & Q_{13} \\ Q_{21} & Q_{22} & Q_{23} \\ 3^{-1/2} & 3^{-1/2} & 3^{-1/2} \end{pmatrix} \in \rO(3)
\]
(see Example~\ref{reduction}) and calculate the vector
\begin{equation}
\eC(2a-b-c, \; 2b-a-c, \; 2c-a-b)Q\st.
\label{new}
\end{equation}
The first two components of the vector~\eqref{new} are
\begin{align*}
\dZ_1 &= \eC\bigl[ (2a-b-c)Q_{11}+(2b-a-c)Q_{12}+(2c-a-b)Q_{13} \bigr], \\
\dZ_2 &= \eC\bigl[ (2a-b-c)Q_{21}+(2b-a-c)Q_{22}+(2c-a-b)Q_{23} \bigr]
\end{align*}
($\dZ_1^2+\dZ_2^2=1$), the third component is zero. Then $(\dZ_1,\dZ_2)$ is the $1\times 2$ matrix $\dZ$ one deals with.

Do there exist fixed numbers $Q_{11}$, $Q_{12}$, $Q_{13}$, $Q_{21}$, $Q_{22}$, $Q_{23}$ such that the point $(\dZ_1,\dZ_2)$ is \emph{uniformly} distributed on the unit circle $\mS^1\subset\mR^2$ centered at the origin? The answer to this question is negative. Indeed, let $\lambda>0$. The inequalities
\begin{equation}
0\leq\dZ_1\leq\lambda\dZ_2,
\label{key}
\end{equation}
that is
\begin{align*}
0 &\leq (2a-b-c)Q_{11}+(2b-a-c)Q_{12}+(2c-a-b)Q_{13} \\
{} &\leq \lambda\bigl[ (2a-b-c)Q_{21}+(2b-a-c)Q_{22}+(2c-a-b)Q_{23} \bigr], \\
-1 &\leq a \leq 1, \\
-1 &\leq b \leq 1, \\
-1 &\leq c \leq 1,
\end{align*}
determine a certain polyhedron in the Euclidean space $\mR^3$ with coordinates $a$, $b$, $c$. The coordinates of the vertices of this polyhedron and its volume $\eV$ are piecewise rational functions of $\lambda$, $Q_{11}$, $Q_{12}$, $Q_{13}$, $Q_{21}$, $Q_{22}$, $Q_{23}$ with integer coefficients (see Remark~\ref{Volume} below). Consequently, the probability $\eV/8$ of the inequalities~\eqref{key} is also a piecewise rational function of $\lambda$, $Q_{11}$, $Q_{12}$, $Q_{13}$, $Q_{21}$, $Q_{22}$, $Q_{23}$ with integer coefficients [the denominator $8$ in $\eV/8$ being the volume of the cube $\max\bigl( |a|,|b|,|c| \bigr)\leq 1$]. On the other hand, if the point $(\dZ_1,\dZ_2)$ were uniformly distributed on $\mS^1$, then the probability of~\eqref{key} would be equal to $\frac{1}{2\pi}\arctan\lambda$.

\begin{zamechanie}\label{Volume}
Consider a collection of $\eK$ affine hyperplanes
\[
\sum_{j=1}^{\eN} \fa_{ij}x_j = \fb_i, \quad 1\leq i\leq\eK, \qquad \eK\geq\eN+1,
\]
in $\mR^{\eN}=\bigl\{ (x_1,x_2,\ldots,x_{\eN}) \bigr\}$. Then the volume of any finite polytope $\fP$ bounded by these hyperplanes is a piecewise rational function of $\fa_{ij}$, $\fb_i$ with integer coefficients. Indeed, the coordinates of the vertices of this polytope are (piecewise) rational functions of $\fa_{ij}$, $\fb_i$ with integer coefficients according to Cramer's rule. Now triangulate the polytope $\fP$ into a set of $\eN$-dimensional simplices. The volume of any simplex with vertices $(\fv_{1\iota},\fv_{2\iota},\ldots,\fv_{\eN\iota})$, $1\leq\iota\leq\eN+1$, is equal to
\[
\pm\frac{1}{\eN!}\begin{vmatrix}
1 & 1 & \cdots & 1 \\
\fv_{11} & \fv_{12} & \cdots & \fv_{1,\eN+1} \\
\fv_{21} & \fv_{22} & \cdots & \fv_{2,\eN+1} \\
\vdots & \vdots & \ddots & \vdots \\
\fv_{\eN 1} & \fv_{\eN 2} & \cdots & \fv_{\eN,\eN+1}
\end{vmatrix}.
\]
Thus, the volume of $\fP$ is a polynomial in the coordinates of its vertices with coefficients which become integers after multiplication by $\eN!$. Various formulas and algorithms for computing the volumes of polytopes in Euclidean spaces of arbitrary dimensions are presented in the articles \cite{Volume1,Volume2,Volume3,Volume4,Volume5} and references therein.
\end{zamechanie}

A genuine proof of Conjecture~\ref{main} requires further studies beyond the present paper.

\section{Numerical experiments on the plane}\label{numerics}

At previous stages of the project \cite{A4,A}, we developed and implemented \verb@Fortran@ codes for computing all the hyperangular momenta $J$, $K$, $\La$, and $L$ (defined in Section~\ref{momenta}) and the energy terms $T$, $\TL$, $\Trho$, $\Trot$, $\TI$, $\Txi$, $\Text$, $\Tint$, $\Tres$, $\TJ$, $\TK$, $\Tac$, $\Eout$, $\EoutA$, $\EoutB$, $\Ein$, $\EinA$, $\EinB$, and $\Ec$ (defined in Sections~\ref{partitions} and~\ref{elaboration}) for the physically interesting dimensions $d=2$ and $d=3$ of the ambient space. The input data for these codes are the entries of the $\dtn$ matrices $Z$ and $\dZ$ and the total mass $M$. For fixed dimension $d$, the number of operations required in the calculation grows linearly with the number $N$ of particles ($N=n$ or $N=n+1$), cf.\ Remark~\ref{linear}.

Using the codes prepared, we have verified Conjecture~\ref{main} (and, to some extent, Conjecture~\ref{residual}) for systems of classical particles on the plane ($d=2$) by numerical simulations. Within each of the two situations defined in Section~\ref{contexts} (particles with equal masses and particles with random masses) and for each value of $N$ from $3$ through $100$ (i.e., for a total of $98$ values), $\fL=10^6$ systems of $N$ particles on the Euclidean plane with the center-of-mass at the origin were chosen using a random number generator according to the procedures described in Section~\ref{contexts}. A random point $\bw$ uniformly distributed in the unit disc on $\mR^2$ centered at the origin was always generated as
\[
\bw = \vark^{1/2}(\cos\varp, \; \sin\varp),
\]
where $\vark$ and $\varp$ are independent random variables uniformly distributed in the intervals $0\leq\vark\leq 1$ and $0\leq\varp\leq 2\pi$. For each system, we calculated all the energy terms $T=1$, $\TL$, $\Trho$, $\Trot$, $\TI$, $\Txi$, $\Text$, $\Tint$, $\Tres$, $\TJ$, $\TK$, $\Tac$, $\Eout$, $\EoutA$, $\EoutB$, $\Ein$, $\EinA$, $\EinB$, and $\Ec$ defined in Sections~\ref{partitions} and~\ref{elaboration}.

Similar simulations were performed in our previous paper~\cite{A} for $d=3$ with $10^5$ systems for each value of $N$ from $3$ through $100$ in each of the two situations.

As one expects, for all the systems on $\mR^2$ in both the situations we found $\TJ=\Text$ (see Theorem~\ref{TJText}) and $\EoutB=0$, i.e., $\EoutA=\Eout$ [see~\eqref{da}]. For $N=3$ and all the systems in both the situations, we also found $\TK=\Tint$ (see Theorem~\ref{TJText}) and $\EinB=0$, i.e., $\EinA=\Ein$ [see~\eqref{da}].

The energies $\Tres$, $\Tac$, and $\Ec$ can be both positive and negative~\cite{A4}. We will denote their positive and negative ``components'' as
\begin{gather*}
\aTres=\max(\Tres,0), \qquad \zTres=-\min(\Tres,0), \qquad \Tres=\aTres-\zTres, \\
\aTac=\max(\Tac,0), \qquad \zTac=-\min(\Tac,0), \qquad \Tac=\aTac-\zTac, \\
\aEc=\max(\Ec,0), \qquad \zEc=-\min(\Ec,0), \qquad \Ec=\aEc-\zEc.
\end{gather*}

As was pointed out above, the quantities $\EoutA$ and $\EinA$ (and, consequently, $\Eout$, $\Ein$, $-\Ec$, and $\zEc$) can be arbitrarily large for any fixed value $T>0$. In our simulations, for any number $N$ of particles, we encountered systems for which $\Eout\geq 5.42592\times 10^4$ (for equal masses) or $\Eout\geq 5.1008\times 10^4$ (for random masses); recall that $\Eout\equiv\EoutA$ in our calculations. The maximal values of $\Eout$ (over all $N$) we observed turned out to be $2.34534\times 10^8$ (for equal masses) and $2.07751\times 10^7$ (for random masses). Similarly, for any number $N$ of particles, there were systems for which $\EinA\geq 5.41936\times 10^4$ (for equal masses) or $\EinA\geq 5.10724\times 10^4$ (for random masses), and the maximal values of $\EinA$ over all $N$ were equal to $2.34534\times 10^8$ (for equal masses) and $2.07787\times 10^7$ (for random masses). Finally, for any number $N$ of particles, systems occurred for which $\Ec\leq -1.08453\times 10^5$ (for equal masses) or $\Ec\leq -1.0208\times 10^5$ (for random masses), and the minimal values of $\Ec$ over all $N$ were $-4.69068\times 10^8$ (for equal masses) and $-4.15538\times 10^7$ (for random masses). Moreover, in each of the two situations, the maximal values of $\Eout$, $\EinA$, $\Ein$, and $-\Ec$ were attained at the same system---for which the ``angle'' between the spaces $\PiR(Z)$ and $\PiQ(Z)$ is very small (cf.~\cite{A4}). Since the quantities $\Eout$, $\EinA$, $\Ein$, and $\zEc$ are unbounded, we did not examine their statistics.

\begin{zamechanie}
Of course, the large number of digits in the data above only reflects the particular set of numerical experiments. The same remark refers to similar data below where only several first digits are significant and informative.
\end{zamechanie}

On the other hand, the energies $\TL$, $\Trot$, $\Trho$, $\TI$, and $\Txi$ do not exceed $T=1$ while the terms $\Text\equiv\TJ$, $\Tint$, $\Tres$, $\aTres$, $\zTres$, $\TK$, $\Tac$, $\aTac$, $\zTac$, $\EinB$, and $\aEc$ do not exceed $\Trot$ according to Theorems~\ref{So}, \ref{TJText}, and~\ref{outin}. For each of these quantities, we computed the \emph{mean values}
\[
\overH = \frac{1}{\fL}\sum_{l=1}^{\fL} H_l,
\]
where $H_l$ is the value of the energy $H$ in question for the $l$th system for the given $N$ within the given situation (recall that $\fL=10^6$), and the \emph{sample variances}
\[
s^2(H) = \overline{\left( H-\overH \,\right)^2} = \overline{H^2}-\left(\, \overH \,\right)^2
\]
(see e.g.\ the manuals \cite{Cramer,Lagutin,Waerden,Wilks} and references therein). Note that the sample variance is often defined as $\sstar^2(H)=\fL\,s^2(H)/(\fL-1)$ \cite{Waerden,Wilks}. The mathematical expectations of $s^2(H)$ and $\sstar^2(H)$ are equal to $(\fL-1)(\Var H)/\fL$ and $\Var H$, respectively \cite{Cramer,Lagutin,Waerden,Wilks}, where $\Var H = \mE\bigl[ (H-\mE H)^2 \bigr] = \mE(H^2)-(\mE H)^2$ is the variance of $H$. However, for $\fL=10^6$, the difference between the ``biased'' sample variance $s^2(H)$ and the ``unbiased'' sample variance $\sstar^2(H)$ is of course negligible.

The dependences of $\overH$ on $N$ for various terms $H$ and both the mass distributions considered are presented in Figs.~\mbox{\ref{fig1}--\ref{fig4}}. Along the abscissa axis on each of these figures, the ``physical'' distance $x$ between the left end point corresponding to $N=3$ and the point corresponding to a given $N$ is proportional to $\frac{1}{2}-\frac{1}{N-1}$. In such a coordinate frame, any dependence
\[
y=\frac{a\nu+b}{c\nu}, \qquad \nu=N-1,
\]
is represented by a \emph{straight line}:
\[
x=\frac{1}{2}-\frac{1}{\nu} \quad\; \Longleftrightarrow \quad\; \nu=\frac{2}{1-2x} \quad\; \Longrightarrow \quad\; y=\frac{a\nu+b}{c\nu}=\frac{2a+b-2bx}{2c}.
\]
In Figs.~\mbox{\ref{fig2}--\ref{fig4}}, we also show the fractions of systems for which $\Tres<0$, $\Tac<0$, or $\Ec>0$. The ``oscillations'' on the corresponding dotted lines in Figs.~\ref{fig2} and~\ref{fig4} in the region of large $N$ are due to some subtle shortcomings of the graphic system we used (\verb@gnuplot 4.0@).

\begin{center}
\begin{figure}[ht]\center
\vspace{-2.7cm}
\includegraphics[width=1.17\textwidth]{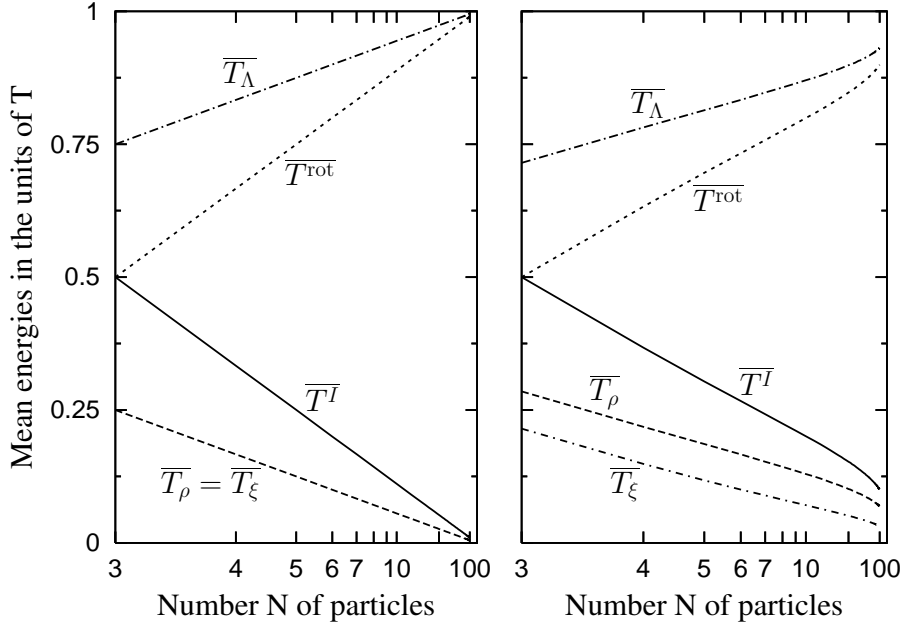}
\vspace{-9.57cm}
\caption{The mean values of the terms of the Smith decomposition~\eqref{Smith} and orthogonal decomposition~\eqref{orthogonal} of the total kinetic energy $T\equiv 1$, as well as the mean values of the shape energy $\Txi$. The left panel corresponds to the equal mass case and the right one, to the random mass case. The physical space dimension $d$ is equal to $2$.} \label{fig1}
\end{figure}
\end{center}

\begin{center}
\begin{figure}[ht]\center
\vspace{-2.7cm}
\includegraphics[width=1.17\textwidth]{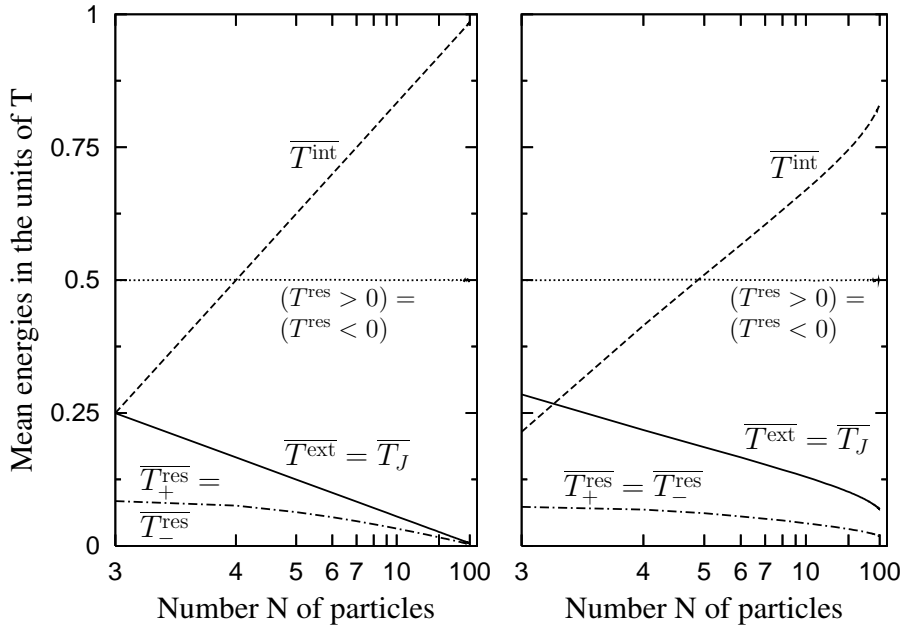}
\vspace{-9.57cm}
\caption{The mean values of the terms of the projective partition~\eqref{projective} of the rotational energy $\Trot$ for $T\equiv 1$. The left panel corresponds to the equal mass case and the right one, to the random mass case. The physical space dimension $d$ is equal to $2$, so that $\Text\equiv\TJ$ according to Theorem~\ref{TJText}. The mean values of $\Tres_+$ and $\Tres_-$ are indistinguishable on the scale of the figure, so that the mean values of the residual energy $\Tres$ itself are indistinguishable from zero. The dotted line labeled as ``$(\Tres>0)=(\Tres<0)$'' presents the fraction of systems for which $\Tres$ is positive or the fraction of systems for which $\Tres$ is negative (both these fractions are indistinguishable from $1/2$ on the scale of the figure).} \label{fig2}
\end{figure}
\end{center}

\begin{center}
\begin{figure}[ht]\center
\vspace{-2.7cm}
\includegraphics[width=1.17\textwidth]{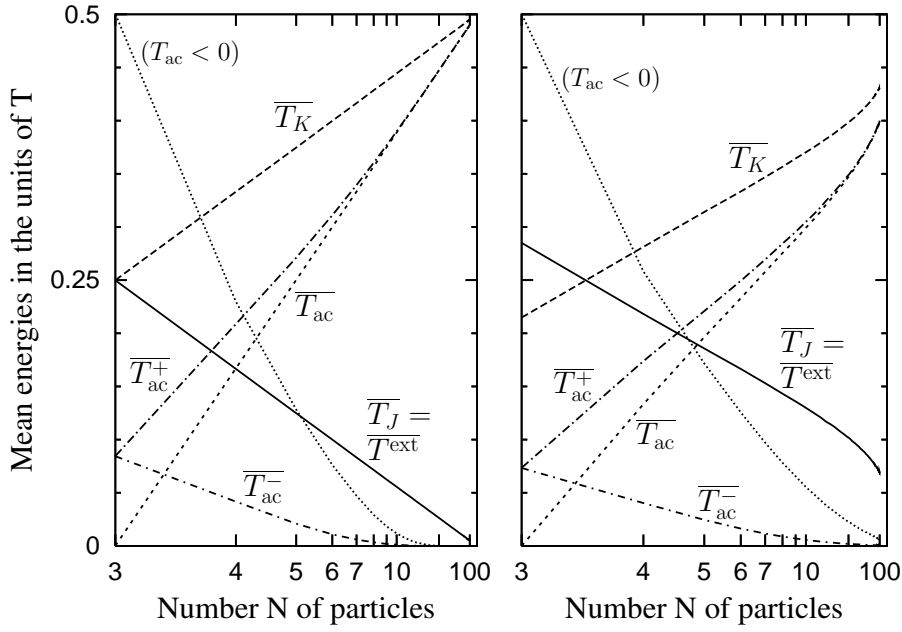}
\vspace{-9.57cm}
\caption{The mean values of the terms of the hyperspherical partition~\eqref{hyperspherical} of the rotational energy $\Trot$ for $T\equiv 1$. The left panel corresponds to the equal mass case and the right one, to the random mass case. The physical space dimension $d$ is equal to $2$, so that $\TJ\equiv\Text$ according to Theorem~\ref{TJText}. The dotted line labeled as ``$(\Tac<0)$'' presents the fraction of systems for which the angular coupling energy $\Tac$ is negative.} \label{fig3}
\end{figure}
\end{center}

\begin{center}
\begin{figure}[ht]\center
\vspace{-2.7cm}
\includegraphics[width=1.17\textwidth]{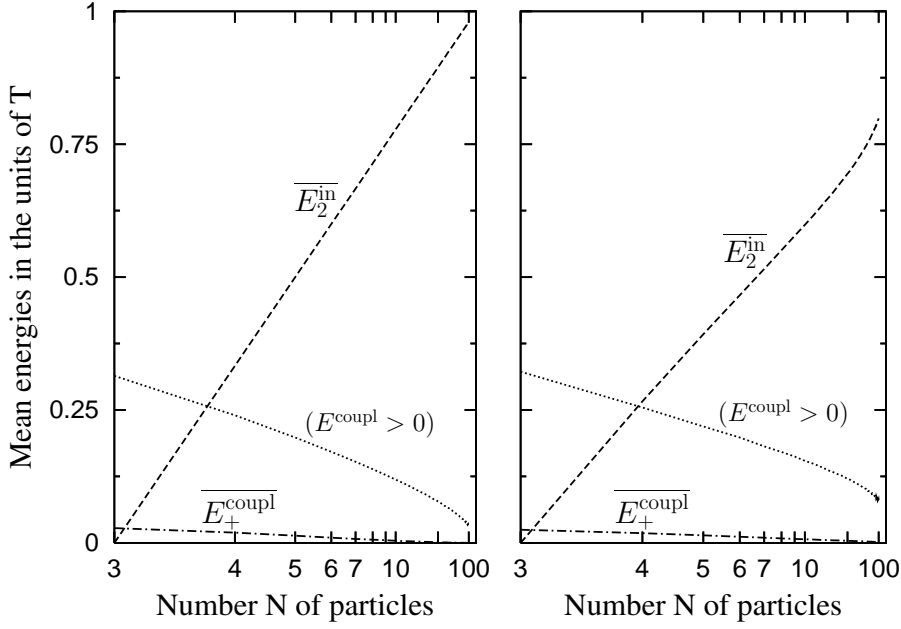}
\vspace{-9.57cm}
\caption{The mean values of the bounded terms of the singular value expansion~\eqref{singular} [and of its refinement~\eqref{further}] of the rotational energy $\Trot$ for $T\equiv 1$. The left panel corresponds to the equal mass case and the right one, to the random mass case. The physical space dimension $d$ is equal to $2$. The dotted line labeled as ``$(\Ec>0)$'' presents the fraction of systems for which the tangent coupling energy $\Ec$ is positive.} \label{fig4}
\end{figure}
\end{center}

For $d=2$ and $N\geq 3$ (and, consequently, $\nu=N-1\geq 2$ and $\omega=2$), the formulas \mbox{\eqref{odin}--\eqref{pyat}} of Conjecture~\ref{main} take the form
\begin{gather}
\mE\TL=1-\frac{1}{2\nu}, \qquad \mE\Trho=\frac{1}{2\nu}, \qquad \mE\Trot=1-\frac{1}{\nu}, \qquad \mE\TI=\frac{1}{\nu},
\label{odin2} \\
\mE\Txi=\frac{1}{2\nu},
\label{dva2} \\
\mE\Text=\frac{1}{2\nu}, \qquad \mE\Tint=1-\frac{3}{2\nu}, \qquad \mE\Tres=0,
\label{tri2} \\
\mE\TJ=\frac{1}{2\nu}, \qquad \mE\TK=\frac{\nu-1}{2\nu}, \qquad \mE\Tac=\frac{\nu-2}{2\nu},
\label{chetyre2} \\
\mE\EoutB=0, \qquad \mE\EinB=1-\frac{2}{\nu}.
\label{pyat2}
\end{gather}
Of course, the equality $\mE\Text=\mE\TJ$ follows from the fact that $\Text\equiv\TJ$ for $d=2$, and the equality $\mE\EoutB=0$ follows from the fact that generically $\EoutB=0$ for $N\geq d+1$. These two equalities hold for any distribution of the masses. Similarly, the equality $\mE\EinB=0$ for $\nu=2$ (i.e., for $N=3$) follows from the fact that generically $\EinB=0$ for $N\leq d+1$ (again for any distribution of the masses). On the other hand, the formulas \mbox{\eqref{odin2}--\eqref{tri2}} indicate that $\mE\Trho=\mE\Txi=\mE\Text$ for $d=2$ and $N\geq 3$ in the situation of equal masses (although the hyperradial energy $\Trho$, the shape energy $\Txi$, and the external energy $\Text$ of an individual system on the plane are not related to each other at all).

Thus, what we verified was the formulas \mbox{\eqref{odin2}--\eqref{chetyre2}} for the $10$ terms $\TL$, $\Trho$, $\Trot$, $\TI$, $\Txi$, $\Text$, $\Tint$, $\Tres$, $\TK$, $\Tac$ in the situation of equal masses for $3\leq N\leq 100$, the formula~\eqref{pyat2} for the term $\EinB$ in the situation of equal masses for $4\leq N\leq 100$, and the formula $\mE\Tres=0$ in the situation of \emph{random} masses for $3\leq N\leq 100$ (see Conjecture~\ref{residual})---a total of
\[
11\times 98+97=1175
\]
equalities of the form $\overH=\mE H$ where the $\mE H$ values are given by the formulas \mbox{\eqref{odin2}--\eqref{pyat2}}. All these equalities \emph{were indeed obtained} in our calculations. Namely, the minimal value, the maximal value, and the mean value of the difference $\bigl| \overH-\mE H \bigr|$ [over the $1175$ triples
\[
(\text{the term $H$}, \; N, \; \text{equal/random masses})
\]
considered] were found to be $0$,\, $4.81437\times 10^{-4}$, and $2.7892\times 10^{-5}$, respectively, and the minimal value, the maximal value, and the mean value of the ratio
\[
\frac{\bigl| \overH-\mE H \bigr|}{\bigl[ s^2(H) \big/ \fL \bigr]^{1/2}}, \qquad \fL=10^6,
\]
turned out to be equal to $0$,\, $2.41227$, and $0.698384$, respectively (so that indeed $\overH=\mE H$ up to statistical errors \cite{Cramer,Lagutin,Waerden,Wilks}, see Figs.~\mbox{\ref{fig1}--\ref{fig4}}). Note also that the minimal value, the maximal value, and the mean value of the ``weighted'' difference $2\nu\bigl| \overH-\mE H \bigr|$ were $0$,\, $2.31975\times 10^{-2}$, and $1.63148\times 10^{-3}$, respectively, while the minimal value, the maximal value, and the mean value of the sample variance $s^2(H)$ equaled $5.02904\times 10^{-5}$, $8.45973\times 10^{-2}$, and $3.5151\times 10^{-3}$, respectively.

We do not show the standard errors $\delta H=\bigl[ s^2(H) \big/ \fL \bigr]^{1/2}=10^{-3}s(H)$ of the mean values $\overH$ \cite{Cramer,Lagutin,Waerden,Wilks} in Figs.~\mbox{\ref{fig1}--\ref{fig4}} since the numbers $\overH+\delta H$ and $\overH-\delta H$ are indistinguishable on the scale of the figures.

The minimal value, the maximal value, and the mean value of the number of systems for which $\Tres<0$ (over the $98$ values of $N$) were equal to $498\,853$, $501\,308$, and $499\,930$ (of $\fL=10^6$), respectively, in the situation of equal masses and to $498\,691$, $501\,145$, and $499\,963$, respectively, in the situation of random masses. Therefore, in both the situations, $\Tres$ is negative for half of the systems up to statistical errors (see Fig.~\ref{fig2}).

We have thus confirmed Conjectures~\ref{main} and~\ref{residual} for $d=2$ and $3\leq N\leq 100$ (to be more precise, Conjecture~\ref{residual} has been confirmed only in the situations of equal and random masses). Moreover, it turns out that $s^2(\aTres)=s^2(\zTres)$ at any $N$ for both the mass distributions up to statistical errors.

The equality $\mE\Tres=0$ implies that $\mE\aTres=\mE\zTres$ for both the mass distributions. Our numerical simulations suggest that for sufficiently large $N$ (say, for $N\gtrsim 30$) in the situation of equal masses,
\[
\mE\aTres=\mE\zTres\approx\frac{5}{16\nu},
\]
and this approximation is quite good. No similar formula exists for $\mE\aTres$ and $\mE\zTres$ in the situation of random masses. Recall that in our previous paper~\cite{A}, we found that for spatial systems of particles ($d=3$), $\mE\aTres=\mE\zTres\approx 5/(12\nu)$ for sufficiently large $N$ in the situation of equal masses. It would be interesting to figure out how the coefficients $5/16$ and $5/12$ can be generalized to larger dimensions $d$. For instance, the conjectural formula
\[
\mE\aTres=\mE\zTres\approx\frac{5(d-1)}{8d\nu}
\]
gives the desired values $\frac{5}{16\nu}$ and $\frac{5}{12\nu}$ for $d=2$ and $d=3$, respectively, and agrees with the fact that $\Tres\equiv 0$ for $d=1$ (independently of the masses of the particles, see Example~\ref{distwo}).

There are no analogs of the formulas \mbox{\eqref{odin2}--\eqref{pyat2}} for the situation of random masses (except for the equalities $\mE\Tres=\mE\EoutB=0$, of course). However, for large $N$ (starting with $N\approx 50$, say), the following \emph{approximate} equalities hold in our simulations in the situation of random masses:
\begin{gather*}
\overline{\TL}\approx\frac{1.88N-4.04}{2\nu}, \qquad \overline{\Trho}\approx\frac{0.12N+2.04}{2\nu}, \\
\overline{\Trot}\approx\frac{1.83N-5.2}{2\nu}, \qquad \overline{\TI}\approx\frac{0.17N+3.2}{2\nu}, \qquad \overline{\Txi}\approx\frac{0.05N+1.15}{2\nu}, \\
\overline{\Text}=\overline{\TJ}\approx\frac{0.12N+2.05}{2\nu}, \qquad \overline{\Tint}\approx\frac{1.71N-7.22}{2\nu}, \\
\overline{\aTres}=\overline{\zTres}\approx\frac{0.03N+0.79}{2\nu}, \\
\overline{\TK}\approx\frac{0.88N-3.03}{2\nu}, \qquad \overline{\Tac}\approx\frac{0.83N-4.22}{2\nu}, \\
\overline{\EinB}\approx\frac{1.66N-8.39}{2\nu}.
\end{gather*}
The coefficients $a$ and $b$ in these expressions $\overH\approx (aN+b)/(2\nu)$ are obtained for each mean energy $\overH$ via minimizing the sum
\[
\sum_{N=50}^{100} \bigl[ 2(N-1)\overH(N)-aN-b \bigr]^2.
\]
In fact, we found $b=0.78$ for $\overline{\aTres}$ and $b=0.8$ for $\overline{\zTres}$.

Since $\TJ=\Text$ for any $N$ (provided that $d=2$) and $\TK=\Tint$ for $N=3$, one concludes that $\Tac=\Tres$ for $N=3$ independently of the masses. Accordingly, in our simulations with $N=3$, $\overline{\Tac}=0$ and $\Tac$ is negative for half of the systems up to statistical errors for both the mass distributions. For larger values of $N$, one finds $\overline{\Tac}>0$, while the fraction of systems for which $\Tac$ is negative decreases rapidly as $N$ grows for both the mass distributions (see Fig.~\ref{fig3}). However, for each $N\geq 4$, this fraction is greater in the situation of random masses. For particles with random masses, we encountered systems with a negative energy $\Tac$ for any value of $N$. On the other hand, for particles with equal masses for $N=32$, $N=33$, and $37\leq N\leq 100$, the angular coupling energy $\Tac$ of \emph{all} the $10^6$ systems we sampled turned out to be positive (for each value $N=34$, $35$, and $36$, we observed only one system with $\Tac<0$). In fact, systems with the center-of-mass at the origin and with a negative energy $\Tac$ exist for any $d\geq 2$, $N\geq 3$ and any masses $m_1,m_2,\ldots,m_N$, but the ``relative measure'' of the set of such systems is very small for large $N$ and $m_1=m_2=\cdots=m_N$. To ``construct'' a system of $N>3$ particles with prescribed masses and with $\Tac<0$, one may choose a pair $(Z,\dZ)$ corresponding to $N=3$ and $\Tac<0$ and then augment both the matrices $Z$ and $\dZ$ on the right by $N-3$ zero columns (cf.~\cite{A}).

For both the mass distributions and any value of $N$, the fraction of systems for which $\Ec>0$ is much less than $1/2$ and decreases fast as $N$ grows for both the mass distributions (see Fig.~\ref{fig4}). For each $N$, this fraction is greater in the situation of random masses. Nevertheless, even for $N=100$ in the situation with equal masses, the tangent coupling energy $\Ec$ was positive for $32\,969$ systems of $10^6$.

In our numerical experiments for spatial systems of particles~\cite{A}, the energies $\Tac$ and $\Ec$ exhibited a similar behavior.

In a complete analogy with the case $d=3$~\cite{A}, the $14$ energy terms $\TL$, $\Trho$, $\Trot$, $\TI$, $\Txi$, $\Text$, $\Tint$, $\aTres$, $\TK$, $\Tac$, $\aTac$, $\zTac$, $\EinB$, and $\aEc$ in the present simulations for $d=2$ can be divided into two classes, cf.\ Figs.~\mbox{\ref{fig1}--\ref{fig4}}:

\smallskip

A) The seven terms $\TL$, $\Trot$, $\Tint$, $\TK$, $\Tac$, $\aTac$, $\EinB$. For each of these energies $H$,
\begin{equation}
\begin{gathered}
\text{the mean value $\overH(N)$ \emph{increases} as $N$ grows for both the mass distributions, and} \\
\overH(N)\equal > \overH(N)\random, \qquad 3\leq N\leq 100.
\end{gathered}
\label{Abigail}
\end{equation}

B) The seven terms $\Trho$, $\TI$, $\Txi$, $\Text$, $\aTres$, $\zTac$, $\aEc$. For each of these energies $H$,
\begin{equation}
\begin{gathered}
\text{the mean value $\overH(N)$ \emph{decreases} as $N$ grows for both the mass distributions, and} \\
\overH(N)\equal < \overH(N)\random, \qquad 3\leq N\leq 100.
\end{gathered}
\label{Brittany}
\end{equation}

We do not consider here the terms $\Tres$ and $\zTres$ because for both the mass distributions, $\overline{\Tres}=0$ and $\overline{\zTres}=\overline{\aTres}$ for any $N$ up to statistical errors. We do not consider the term $\TJ$ either since $\TJ\equiv\Text$ for $d=2$.

In some cases, the increase/decrease of $\overH(N)$ as $N$ grows is slightly non-monotonous for large $N$ due to statistical errors. Moreover, in the situation of equal masses, as was pointed out above, $\zTac$ turned out to be zero for all the systems sampled for $N\geq 37$ (in fact, $\overline{\zTac}$ was found to be very small for $N\gtrsim 15$).

The exceptions to the rules \mbox{\eqref{Abigail}--\eqref{Brittany}} are as follows. First, $\EinB\equiv 0$ for $N=3$ independently of the masses. Second,
\begin{equation}
\begin{aligned}
& \overline{\Trot}(3)\equal = \overline{\Trot}(3)\random = \tfrac{1}{2} \qquad \text{and accordingly} \\
& \overline{\TI}(3)\equal = \overline{\TI}(3)\random = \tfrac{1}{2}
\end{aligned}
\label{confirm}
\end{equation}
up to statistical errors. Third,
\[
\overline{\Tac}(3)\equal = \overline{\Tac}(3)\random = 0
\]
up to statistical errors. Fourth,
\[
\overline{\Txi}(N)\equal > \overline{\Txi}(N)\random \qquad \text{for} \quad 3\leq N\leq 5
\]
and
\[
\overline{\Txi}(6)\equal \approx \overline{\Txi}(6)\random
\]
within statistical errors. Fifth,
\[
\overline{\aTres}(N)\equal > \overline{\aTres}(N)\random \qquad \text{for} \quad 3\leq N\leq 5.
\]
Sixth,
\begin{align*}
& \overline{\zTac}(N)\equal > \overline{\zTac}(N)\random \qquad \text{and} \\
& \overline{\aEc}(N)\equal > \overline{\aEc}(N)\random \qquad \text{for} \quad 3\leq N\leq 4.
\end{align*}
In our previous paper~\cite{A}, being based on numerical simulations for spatial systems of particles ($d=3$) at $3\leq N\leq 4$ and on the formulas \mbox{\eqref{zwei}--\eqref{drei}} concerning two-particle systems for any $d$, we conjectured that $\mE\Trot=(d-1)/d$ and $\mE\TI=1/d$ at all the values $N\leq d+1$ (and $T=1$) for any mass distribution. The relations~\eqref{confirm} confirm this conjecture for $d=2$, $N=3$.

The larger is the number $N$ of particles, the greater is the contribution of ``kinematic rotations'' to the total kinetic energy $T$ of the system (see Remark~\ref{smysl}) and the smaller is the contribution of conventional rotations and changes in the singular values $\xi_1,\xi_2,\ldots,\xi_\omega$. That is why the mean values of the energies $\TL$, $\Trot$, $\Tint$ (which include the contribution of ``kinematic rotations''), and $\TK$ (which is connected with the kinematic angular momentum $K$) increase as $N$ grows for both the mass distributions whereas the mean values of the energies $\Trho$, $\TI$, $\Txi$, and $\Text=\TJ$ decrease. It is however not clear why the mean values of the terms $\Tac$, $\aTac$, and $\EinB$ increase as $N$ grows while the mean values of the terms $\aTres$, $\zTac$, and $\aEc$ decrease. Why
\begin{equation}
\overH(N)\equal > \overH(N)\random \qquad \text{for increasing terms} \;\;\; \overH(N)
\label{AA}
\end{equation}
and
\begin{equation}
\overH(N)\equal < \overH(N)\random \qquad \text{for decreasing terms} \;\;\; \overH(N)
\label{BB}
\end{equation}
is a complete mystery.

Similarly to the case $d=3$~\cite{A}, the sample variances $s^2(H)$ of each of the $15$ energy terms $\TL$, $\Trho$, $\Trot$, $\TI$, $\Txi$, $\Text$, $\Tint$, $\Tres$, $\aTres$, $\TK$, $\Tac$, $\aTac$, $\zTac$, $\EinB$, and $\aEc$ for $d=2$ \emph{decrease} as the number $N$ of particles grows for both the mass distributions [recall that $s^2(\zTres)=s^2(\aTres)$ up to statistical errors]. For large $N$, this decrease is sometimes slightly non-monotonous due to statistical errors. Of course, $s^2(\zTac)=0$ in the situation of equal masses for $N\geq 37$. As one expects, the sample variance of each of these terms in the situation of random masses is \emph{larger} than that in the situation of equal masses for the same value of $N$. For all the terms $H$, the sample variance $s^2(H)$ gets very small in the situation of equal masses for large $N$.

The exceptions to these rules are as follows. First, in the situation of random masses, $s^2(\Tint)$ decreases starting with $N=4$ (rather than with $N=3$): this quantity for $N=4$ is larger than for $N=3$. Second, for both the mass distributions, $s^2(\Tac)$ decreases starting with $N=4$. Third, $s^2(\aTac)$ decreases starting with $N=4$ in the situation of equal masses and with $N=5$ in the situation of random masses. Fourth, $s^2(\EinB)$ decreases starting with $N=4$ in the situation of equal masses and with $N=6$ in the situation of random masses. Moreover, the variance of $\EinB$ for $N=3$ is of course zero independently of the masses. Apart from this, the inequality
\begin{equation}
s^2(H)\equal < s^2(H)\random
\label{CC}
\end{equation}
is violated in the following cases:
\begin{align*}
& \text{for} \quad H=\Tint \text{ and } \TK & & \text{at} \quad N=3, \\
& \text{for} \quad H=\EinB & & \text{at} \quad N=4, \\
& \text{for} \quad H=\zTac \text{ and } \aEc & & \text{at} \quad 3\leq N\leq 4, \\
& \text{for} \quad H=\Txi, \; \Tres, \; \aTres, \; \Tac, \text{ and } \aTac & & \text{at} \quad 3\leq N\leq 5.
\end{align*}
As one sees, there is a strong correlation between the violation of the inequality~\eqref{CC} and that of the inequalities \mbox{\eqref{AA}--\eqref{BB}}.

Our simulations confirm that the projective partition~\eqref{projective} ensures a very effective separation between the conventional rotations and ``kinematic rotations'' \cite{A2,A,A-Appl1,A-Appl2} compared with the hyperspherical partition~\eqref{hyperspherical}, not to mention the singular value expansion~\eqref{singular}. The mean absolute value $\overline{|\Tres|}$ of the residual energy decreases as $N$ grows for both the mass distributions, whereas the mean absolute value $\overline{|\Tac|}$ of the angular coupling energy increases. For $N=3$ one has $\Tres=\Tac$ independently of the masses, and consequently $\overline{|\Tres|}(3)=\overline{|\Tac|}(3)$; these mean values are equal to $0.168796$ in the situation of equal masses and to $0.146767$ in the situation of random masses. For $N\geq 4$, one has $\overline{|\Tres|}<\overline{|\Tac|}$, and the larger $N$, the greater is this difference. At $N=4$
\begin{align*}
& \overline{|\Tres|}(4)\random=0.136738 < \overline{|\Tres|}(4)\equal=0.151223 < {} \\
& \overline{|\Tac|}(4)\random=0.214097 < \overline{|\Tac|}(4)\equal=0.250362,
\end{align*}
while at $N=100$
\begin{align*}
& \overline{|\Tres|}(100)\equal=0.00640506 < \overline{|\Tres|}(100)\random=0.0381073 \ll {} \\
& \overline{|\Tac|}(100)\random=0.400477 < \overline{|\Tac|}(100)\equal=0.489895.
\end{align*}

\section{Conclusions}\label{conclusion}

The statistical studies of our previous paper~\cite{A} (devoted to systems in $\mR^3$) and those of the present paper (devoted to systems on $\mR^2$) are formal in the sense that they do not take into account any interaction potentials between the particles. If one considers kinetic energy partitions for interacting particles with a certain potential energy $\eU$, then it is more natural to average various energy terms $H$ at a fixed total energy $T+\eU$ (averaging over a \emph{microcanonical ensemble}, see e.g.~\cite{Ruelle}) rather than at a fixed total kinetic energy $T$, cf.\ \cite{Marsden1,Marsden2,A2,A-Appl1,A-Appl2}. Choosing potential energy hypersurfaces at random according to some distribution in an appropriate infinite dimensional functional space, one would probably obtain entirely different statistics of the energy components. In this case, it seems suitable to average the ratios $H/T$ or $H/(T+\eU)$ over the initial conditions, the potential $\eU$, and the time. It is also of interest to compute the mean values of the energy terms at fixed values of the total angular momentum $J$ (which is, by the way, customary in quantum mechanics, see e.g.~\cite{Zhang}) or kinematic angular momentum $K$.

There are many ways to generalize the energy partitions treated in the present work. One of them is pointed out in our previous paper~\cite{A} and consists in defining the energy terms corresponding to the actions of arbitrary subgroups of the orthogonal groups $\rO(d)$ and $\rO(N)$. Another approach recently proposed by Marsden and coworkers~\cite{Marsden2} for $d=3$ is called the \emph{hyperspherical mode analysis} by the authors. The $3N-6$ internal modes of an $N$-atom system ($N\geq 5$) in $\mR^3$ are classified in~\cite{Marsden2} into three \emph{gyration-radius} modes, three \emph{twisting} modes, and $3N-12$ \emph{shearing} modes. Most probably, Marsden's theory can be generalized to the case of an arbitrary dimension $d$.

One of the main results of our previous paper~\cite{A} and the present paper is that in the situation of equal masses, the mean values $\mE H$ of various components $H$ of the total kinetic energy $T$ are expressed in terms of the dimension $d$ of the physical space and the number $N$ of particles in a very simple way. However, it is not clear at all whether the \emph{distributions} of $H$ are ``simple'' functions of $d$, $N$, and $H$, not to mention the joint distributions of several components. For instance, we have not attempted to find any expressions for the variances $\Var H$ of $H$ [or, equivalently, for $\mE(H^2)$] or, say, for the correlation coefficients \cite{Cramer,Lagutin,Waerden,Wilks} between the energy terms. We hope that such detailed statistical properties of the kinetic energy partitions of classical systems will be examined (both numerically and rigorously) in further research.

The work of MBS was supported in part by a grant of the President of the Russia Federation, project No.\ NSh-4850.2012.1.

\end{document}